\newtheorem{theorem}{Theorem}
\newtheorem{lemma}[theorem]{Lemma}
\newtheorem{case}{Case}
\newenvironment{proof}[1][Proof]{\begin{trivlist}
\item[\hskip \labelsep {\bfseries #1}]}{\end{trivlist}}
\newtheorem{definition}{Definition}
\newcommand{\ech}{\color{black}  ~}    
\newcommand{\bea}{\begin{eqnarray*}}
\newcommand{\eea}{\end{eqnarray*}}
\newcommand{\bean}{\begin{eqnarray}}
\newcommand{\eean}{\end{eqnarray}}
\newcommand{\bxi}{\boldsymbol{\xi}}
\newcommand{\calX}{\mathcal{X}}
\begin{document}

\title{L\'{e}vy Adaptive B-spline Regression
via Overcomplete Systems}
\author[1]{Sewon Park}
\author[1]{Hee-Seok Oh}
\author[1]{Jaeyong Lee}
\affil[1]{Department of Statistics, Seoul National University}

\maketitle

\begin{abstract}
The estimation of functions with varying degrees of smoothness is a challenging problem in the nonparametric function estimation.  In this paper, we propose the LABS (L\'{e}vy Adaptive B-Spline regression) model, an extension of the LARK models, for the estimation of functions with varying degrees of smoothness. LABS model is a LARK with B-spline bases as generating kernels.  The B-spline basis consists of piecewise $k$ degree polynomials with $k-1$ continuous derivatives and can express systematically functions with varying degrees of smoothness. By changing the orders of the B-spline basis, LABS can systematically adapt the smoothness of functions, i.e., jump discontinuities, sharp peaks, etc. Results of simulation studies and real data examples support that this model catches not only smooth areas but also jumps and sharp peaks of functions. The proposed model also has the best performance in almost all examples. Finally, we provide theoretical results that the mean function for the LABS model belongs to the certain Besov spaces based on the orders of the B-spline basis and that the prior of the model has the full support on the  Besov spaces.

\bigskip
\noindent Key words: Nonparametric Function Estimation;  L\'{e}vy Random Measure;  Besov Space;  Reversible Jump Markov Chain Monte Carlo.
\end{abstract}

\newpage


\section{Introduction}

Suppose we observe $n$ pairs of observations, $(x_1, y_1),(x_2, y_2), \ldots, (x_n, y_n)$ where
\begin{equation}
y_i = \eta(x_i) + \epsilon_i, \quad \epsilon_i \stackrel{iid}{\sim}\mathcal{N}(0, \sigma^2),\quad i  = 1,2, \ldots, n,
\label{eq:funcrelation}
\end{equation}
and $\eta$ is an unknown real-valued function which maps  $\mathbb{R}$  to $\mathbb{R}$. We wish to estimate the mean function $\eta$ that may have varying degrees of smoothness including discontinuities.  In nonparametric function estimation, we often face smooth curves except for a finite number of jump discontinuities and sharp peaks, which are common in many climate and economic datasets.  Heavy rainfalls cause a sudden rise in the water level of a river.  The COVID-19 epidemic brought about a sharp drop in unemployment rates. Policymakers' decisions can give rise to abrupt changes. For instance, the United States Congress passed the National Minimum Drinking Age Act in 1984, which has been debated over several decades  in the United States, establishing 21 as the minimum legal alcohol purchase age. This act caused a sudden rise in mortality for young Americans around 21. The abrupt changes can provide us with meaningful information on these issues, and it is important to grasp the changes.

There has been much research into the estimation of local smoothness of the functions. The first approach is to minimize the penalized sum of squares based on a locally varying smoothing parameter or penalty function across the whole domain. \cite{pintore2006spatially}, \cite{liu2010data}, and \cite{wang2013smoothing} modeled the smoothing parameter of smoothing spline to vary over the domain. \cite{ruppert2000theory}, \cite{crainiceanu2007spatially}, \cite{krivobokova2008fast}, and \cite{yang2017adaptive} suggested the penalized splines based on the local penalty that adapts to spatial heterogeneity in the regression function. The second approach is the adaptive free-knot splines that choose the number and location of the knots from the data. \cite{friedman1991multivariate} and \cite{luo1997hybrid} 
determined a set of knots using stepwise forward/backward knot selection procedures. \cite{zhou2001spatially} avoided the problems of stepwise schemes and proposed optimal knot selection schemes introducing the knot relocation step. \cite{smith1996nonparametric}, \cite{denison1998automatic}, \cite{denison1998bayesian}, and \cite{dimatteo2001bayesian}
studied Bayesian estimation of free knot splines using MCMC techniques. The third approach is to use wavelet shrinkage estimators including VisuShrink based on the universal threshold \citep{donoho1994ideal}, SureShrink based on Stein's unbiased risk estimator (SURE) function \citep{donoho1995adapting}, Bayesian thresholding rules by utilizing a mixture prior \citep{abramovich1998wavelet}, and empirical Bayes methods for level-dependent threshold selection \citep{johnstone2005empirical}. The fourth approach is to detect jump discontinuities in the regression curve. \cite{koo1997spline}, \cite{lee2002automatic}, and \cite{yang2014jump} dealt with the estimation of discontinuous function using B-spline basis functions. \cite{qiu1998local}, \cite{qiu2003jump}, \cite{gijbels2007jump}, and \cite{xia2015jump} identified jumps based on local polynomial kernel estimation.

In this paper, we consider a function estimation method using overcomplete systems.  A subset of the vectors $\{\phi\}_{j \in J}$  of Banach space $\mathcal{F}$ is called a {\it complete system}  if
\begin{equation*}
\| \eta - \sum_{j \in J} \beta_j \phi_j\| < \epsilon, \quad   \forall \eta \in \mathcal{F},  \forall \epsilon > 0,
\end{equation*}
where $\beta_j \in \mathbb{R}$ and $J \in \mathbb{N} \cup \infty$. Such a complete system is {\it overcomplete} if removal of a vector $\phi_j$ from the system does not alter the completeness. In other words, an overcomplete system is constructed by adding basis functions to a complete basis \citep{lewicki2000learning}. Coefficients $\beta_j$ in the expansion of $\eta$ with an overcomplete system are not unique owing to the redundancy intrinsic in the overcomplete system. The non-uniqueness property can provide more parsimonious representations than those with a complete system \citep{simoncelli1992shiftable}.

 The L\'{e}vy Adaptive Regression Kernels (LARK) model, first proposed by \cite{tu2006bayesian}, is a Bayesian regression model  utilizing overcomplete systems with L\'{e}vy process priors.  \cite{tu2006bayesian} showed the LARK model had sparse representations for $\eta$ from an overcomplete system and improvements in nonparametric function estimation.  \cite{pillai2007characterizing} found out the relationship between the LARK model and a reproducing kernel Hilbert space (RKHS), and \cite{pillai2008levy}  proved the posterior consistency of the LARK model. \cite{ouyang2008bayesian} extended the LARK method to the classification problems. \cite{chu2009bayesian} used continuous wavelets as the elements of an overcomplete system. \cite{wolpert2011stochastic} obtained sufficient conditions for LARK models to lie in the some Besov space or Sobolev space.  \cite{lee2020bayesian} devised an extended LARK model with multiple kernels instead of only one type kernel.

In this paper, we develop a fully Bayesian approach with B-spline basis functions as the elements of an overcomplete system and call it the L\'{e}vy Adaptive B-Spline regression (LABS).  Our main contributions of this work can be summarized as follows.
\begin{enumerate}
    \item The LABS model can systematically represent the smoothness of functions  varying  locally by changing the orders of the B-spline basis. The form of a B-spline basis depends on the locations of knots and can be symmetrical or asymmetrical. The varying degree of B-spline basis enables the LABS model to adapt to the smoothness of functions.
    \item We investigate two theoretical properties of the LABS model. First, the mean function of the LABS model exists in certain Besov spaces based on the types of degrees of B-spline basis.  Second, the prior of the LABS model has full support on some Besov spaces. Thus, the proposed LABS model extends the range of smoothness classes of the mean function. 
    \item We provide empirical results demonstrating that our model performs well in the spatially inhomogeneous functions such as the functions with both jump discontinuities, sharp peaks, and smooth parts. The LABS model achieved the best results in almost every experiments compared to  the popular nonparametric function estimation methods. In particular, the LABS model showed remarkable performance in estimating functions with jump discontinuities and outperformed other competing models. 
\end{enumerate}

The rest of the paper is organized as follows.  In \autoref{sec:lark}, we introduce the L\'{e}vy Adaptive Regression Kernels and discuss its properties. In \autoref{sec:newmodel}, we propose the LABS model and present an equivalent model with latent variables  that make the posterior computation tractable. We present three theorems that  the function spaces for the proposed model depend upon the degree of B-spline basis and that  the prior has large support in some function spaces. We describe the detailed algorithm of posterior sampling using reversible jump Markov chain Monte Carlo in \autoref{sec:algorithm}. In \autoref{sec:simulation}, the  LABS model is compared with other methods in  two simulation studies and in \autoref{sec:application} three real-world data sets are analysed using the LABS model. In the last section, we discuss some improvements and possible extensions of the proposed model.


\section{L\'{e}vy adaptive regression kernels}\label{sec:lark}

In this section, we give a brief introduction to the LARK model.  Let $\Omega$ be a complete separable metric space, and $\nu$ be a positive  measure on $\mathbb{R} \times \Omega$ with $ \nu(\{0\}, \Omega) = 0$  satisfying $L_1$ integrability condition,
\begin{equation}
 \label{eq:cond1}
 \int \int_{\mathbb{R} \times A} (1  \wedge |\beta|) \nu(d\beta, d\omega) < \infty,
 \end{equation}
 for each compact set $A \subset \Omega$. The L\'{e}vy random measure $L$ with L\'{e}vy measure $\nu$ is defined as
 \begin{equation*}
 L(d \omega) =  \int_{\mathbb{R}} \beta N(d \beta, d \omega),
 \end{equation*}
 where $ N$ is a Poisson random measure with intensity measure $\nu$. We denote $L \sim \text{L\'{e}vy}(\nu)$. For  any $t \in \mathbb{R}$, the characteristic function of $L(A)$  is
 \vspace{0.1cm}
 \begin{equation}
 \label{eq:LK3}
 \mathbb{E}\left[e^{itL(A)}\right] = \exp\left\{\int \int_{\mathbb{R} \times A} (e ^{it\beta} - 1 ) \nu(d\beta, d\omega)\right\}, \quad \text{for all}\,\, A \subset \Omega.
 \end{equation}

Let $g(x, \omega)$  be a real-valued function defined on $\mathcal{X} \times \Omega$ where $\mathcal{X}$ is another set. By integrating $g$ with respect to a L\'{e}vy random measure $L$, we define a real-valued function on $\mathcal{X}$:
\begin{equation}
\label{eq:PR}
\eta(x) \equiv L[g(x)] = \int_{\Omega} g(x,\omega) L(d \omega) = \int_{\Omega}\int_{\mathbb{R}} g(x,\omega) \beta N(d\beta, d \omega), \forall x \in \mathcal{X}.
\end{equation}
We call $g$ a {\it generating function} of $\eta$.

When $\nu(\mathbb{R} \times \Omega) = M$ is finite,  a L\'{e}vy random measure can be represented as $L(d \omega) = \sum_{j \leq J} \beta_j \delta_{\omega_j}$, where $J$ has a Poisson distribution with mean  $M > 0$ and $\{(\beta_j, \omega_j)\}\stackrel{iid}{\sim}  \pi(d\beta_j, d \omega_j) :=\nu /M, j = 1,2, \ldots, J$. In this case, equation \eqref{eq:PR} can be expressed as
\vspace{-0.3cm}
\begin{equation}
\label{eq:PR2}
\eta(x) =  \sum_{j=1}^{J} g(x,\omega_j) \beta_j,
\end{equation}
 where $\{(\beta_j, \omega_j)\}$ is the random set of finite support points of a Poisson random measure. If $g$ is bounded, $L_1$ integrability condition \eqref{eq:cond1} implies  the existence of \eqref{eq:PR} for all $x$. See \cite{lee2020bayesian}.

 If a L\'{e}vy measure satisfying \eqref{eq:cond1} is infinite, the number of the support points of $N(\mathbb{R},\Omega)$  is infinite almost surely.  \cite{tu2006bayesian} proved that the truncated  L\'{e}vy random field $L_{\epsilon}[g]$ converges in distribution to $L[g]$ as $\epsilon \rightarrow 0$, where
   \begin{equation*}
L_{\epsilon}[g] = \int \int_{[-\epsilon, \epsilon]^{c} \times \Omega} g(x,\omega) \beta N(d\beta, d \omega) = \int \int_{\mathbb{R} \times \Omega} g(x,\omega) \beta N_{\epsilon}(d\beta, d \omega),
 \end{equation*}
 and $N_\epsilon$ is a Poisson measure on $\mathbb{R} \time \Omega$ with mean measure
 \begin{equation*}
 \nu_\epsilon(d\beta,  d \omega) := \nu(d\beta,  d \omega)I_{|\beta| > \epsilon}.
 \end{equation*}
This truncation often used as an approximation of the posterior. For posterior computation methods for the Poisson random measure without truncation, see \cite{lee2007sampling} and \cite{lee2004new}.

Together with data generating mechanism \eqref{eq:funcrelation}, the LARK model is defined as follows:
 \begin{gather*}
\mathbb{E}[Y | L, \theta] = \eta(x) \equiv  \int_{\Omega} g(x,\omega) L(d \omega)\\
L | \theta \sim \text{L\`{e}vy}(\nu) \\
\theta \sim \pi_\theta (d \theta),
\end{gather*}
where  L\`{e}vy($\nu$)  denotes the  L\`{e}vy process which has the characteristic function and $\nu$ is a L\`{e}vy measure satisfying \eqref{eq:cond1}.  \cite{tu2006bayesian} used gamma, symmetric gamma, and symmetric $\alpha$-stable (S$\alpha$S) ($0 < \alpha < 2$)  L\`{e}vy  random fields.  The conditional distribution for $Y$ has a hyperparameter  $\theta$ and $\pi_\theta$ denotes the prior distribution of $\theta$.  The generating function $g(x, \omega)$  is used as elements of an overcomplete system.\cite{tu2006bayesian}  and \cite{lee2020bayesian} used the Gaussian kernel, the Laplace kernel, and Haar wavelet as generating functions:
\begin{itemize}
    \item Haar kernel: $g(x, \omega) := I \left(|\frac{x-\chi}{\lambda}| \leq 1\right)$
    \item Gaussian kernel: $g(x, \omega) = \exp \left\{-\frac{(x-\chi)^2}{2\lambda^2} \right\}$
    \item Laplacian Kernel: $g(x, \omega) = \exp \left\{-\frac{|x-\chi|}{\lambda} \right\}$
\end{itemize}
with $\omega := (\chi, \lambda) \in \mathbb{R} \times \mathbb{R}^{+} := \Omega$.
All of the above generating functions are bounded.

This LARK model can be represented in a hierarchical structure as follows:
\begin{align*}
Y_i |\, \eta(\textbf{x}_i)  &\stackrel{ind}{\sim} \mathcal{N}( \eta(\textbf{x}_i), \sigma^2) \\
 \eta(\textbf{x}_i) &= \sum_{j = 1}^{J}g(\textbf{x}_i, \boldsymbol{\omega}_j) \beta_j \\
 J\, |\, \epsilon &\sim \text{Pois}(\nu_\epsilon(\mathbb{R}, \Omega)) \\
 (\beta_j, \boldsymbol{\omega}_j) | J, \epsilon &\stackrel{i.i.d}{\sim} \pi(d \beta_j, d\boldsymbol{\omega}_j) := \frac{\nu_\epsilon(d \beta_j, d\boldsymbol{\omega}_j) }{\nu_\epsilon(\mathbb{R}, \Omega)}
\end{align*}
 for $j = 1,\ldots, J$. $J$ is the random number that is stochastically determined by L\`{e}vy random measure, $(\beta_1,\ldots, \beta_J)$ is the unknown coefficients of a mean function  and
$(\boldsymbol{\omega}_1,\ldots, \boldsymbol{\omega}_J)$ is the parameters of the generating functions. To obtain samples from the posterior distribution under the LARK model, the reversible jump Markov chain Monte Carlo (RJMCMC)  proposed by \cite{green1995reversible} is used because some parameters have  varying dimensions.

The LARK model stochastically extracts features and finds a compact representation for $\eta(\cdot)$ based on an overcomplete system. That is, it enables functions to be represented by the small number of elements from an overcomplete system. However, both the LARK model and most methods for function estimation use only one type of kernel or basis and can find out the restricted smoothness of the target function. These models cannot afford to capture all parts of the function with various degrees of smoothness. For example, we consider a noisy modified Heavisine function sampled at $n = 512$ equally spaced points on $[0, 1]$ in \autoref{fig:lark_weakness}. The data contains both smooth and non-smooth regions such as peaks and jumps. As shown in panel (a) of \autoref{fig:lark_weakness}, it is difficult for the LARK model with a finite L\`{e}vy measure using Gaussian kernel to estimate jump parts of the data.  We, therefore, propose a new model which can adapt the smoothness of function systematically by using a variety of B-spline bases as the generating elements of an overcomplete system.

\begin{figure}[ht!]
 \centering
 \includegraphics[width=\textwidth]{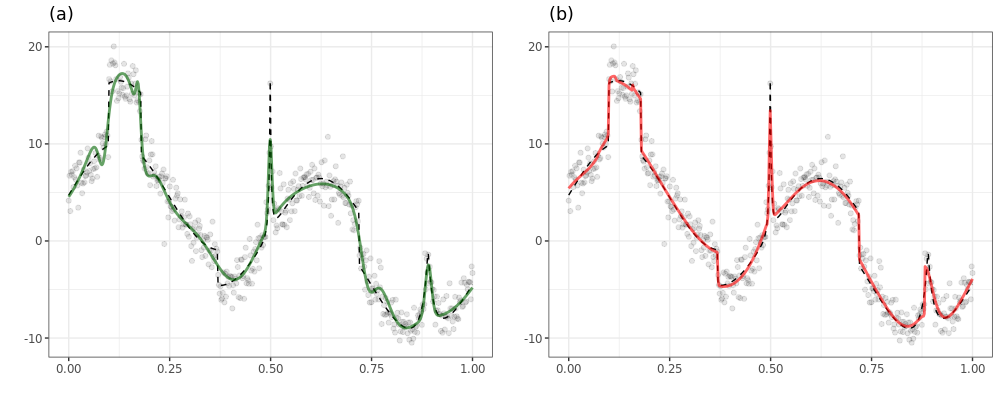}
 \caption{Comparison of curve fitting functions with (a) LARK, and  (b) LABS model for the modified heavisine dataset. The solid lines are estimated functions and the dashed line represents the true function.}
 \label{fig:lark_weakness}
\end{figure}


\section{L\'{e}vy adaptive B-spline regression}\label{sec:newmodel}

We consider a general type of basis function as the generating elements of an overcomplete system instead of specific kernel functions such as  Haar, Laplacian, and Gaussian. The LABS model uses B-spline basis functions which can all systematically express jumps, sharp peaks,  and smooth parts of the function.

\subsection{B-spline Basis}
The B-spline basis function consists of piecewise $k$ degree polynomials with $k-1$ continuous derivatives. In general, the B-spline basis of degree $k$ can be derived utilizing the Cox-de Boor recursion formula:
\begin{equation}
\begin{aligned}
B^{*}_{0,i}(x)&:=\left\{{\begin{matrix}
1 & \mathrm {if} \quad t_{i}\leq x<t_{i+1}\\
0 & \mathrm {otherwise} \end{matrix}}\right.\\
B^{*}_{k,i}(x)&:={\frac {x-t_{i}}{t_{i+k}-t_{i}}}B^{*}_{k-1,i}(x)+{\frac {t_{i+k+1}-x}{t_{i+k+1}-t_{i+1}}}B^{*}_{k-1,i+1}(x),
\end{aligned}
\label{eq:bsp}
\end{equation}
where  $t_i$ are  called knots which must be in non-descending order $t_{i}\leq t_{i+1}$ \citep{de1972calculating}, \citep{cox1972numerical}. The B-spline basis of degree $k$,  $B^{*}_{k,i}(x)$ then needs $(k+2)$ knots, $(t_i, \ldots, t_{i+k+1})$. For convenience of notation, we redefine the B-spline basis of degree $k$ with  a knot sequence $\boldsymbol{\xi}_k := (\xi_{k,1}, \ldots, \xi_{k,k+2})$ as follows.
\begin{equation}
\begin{aligned}
B_{0}(x; \boldsymbol{\xi}_0)&:=\left\{{\begin{matrix}
1 & \mathrm {if} \quad \xi_{0,1} \leq x < \xi_{0,2}\\
0 & \mathrm {otherwise} \end{matrix}}\right.\\
B_{k}(x; \boldsymbol{\xi}_k)&:={\frac {x-\xi_{k,1}}{\xi_{k,(k+1)}-\xi_{k,1}}}B_{k-1}(x; \boldsymbol{\xi}^{\star}_k)+{\frac {\xi_{k,(k+2)}-x}{\xi_{k,(k+2)}-\xi_{k,2}}}B_{k-1}(x; \boldsymbol{\xi}^{\star \star}),
\end{aligned}
\label{eq:bsp2}
\end{equation}
where $\boldsymbol{\xi}^{\star}_k := (\xi_{k,1}, \xi_{k,2}, \ldots, \xi_{k, (k+1)})$ and  $\boldsymbol{\xi}^{\star \star}_k := (\xi_{k,2}, \xi_{k,3}, \ldots, \xi_{k, (k+2)})$. 

The B-spline basis functions can have a variety of shapes and smoothness determined by knot locations and the degree of it. For example, a B-spline basis function can be a piecewise constant (k = 0), linear $(k = 1)$, quadratic  $(k = 2)$ , and cubic $(k = 3)$ functions.  Furthermore, the B-spline basis with equally spaced knots has the symmetric form. These bases are called a Uniform B-splines. Examples of the B-spline basis functions of different degrees with equally spaced knots are shown in \autoref{fig:bsp}.

\begin{figure}[ht!]
	\centering
	\includegraphics[width=10cm]{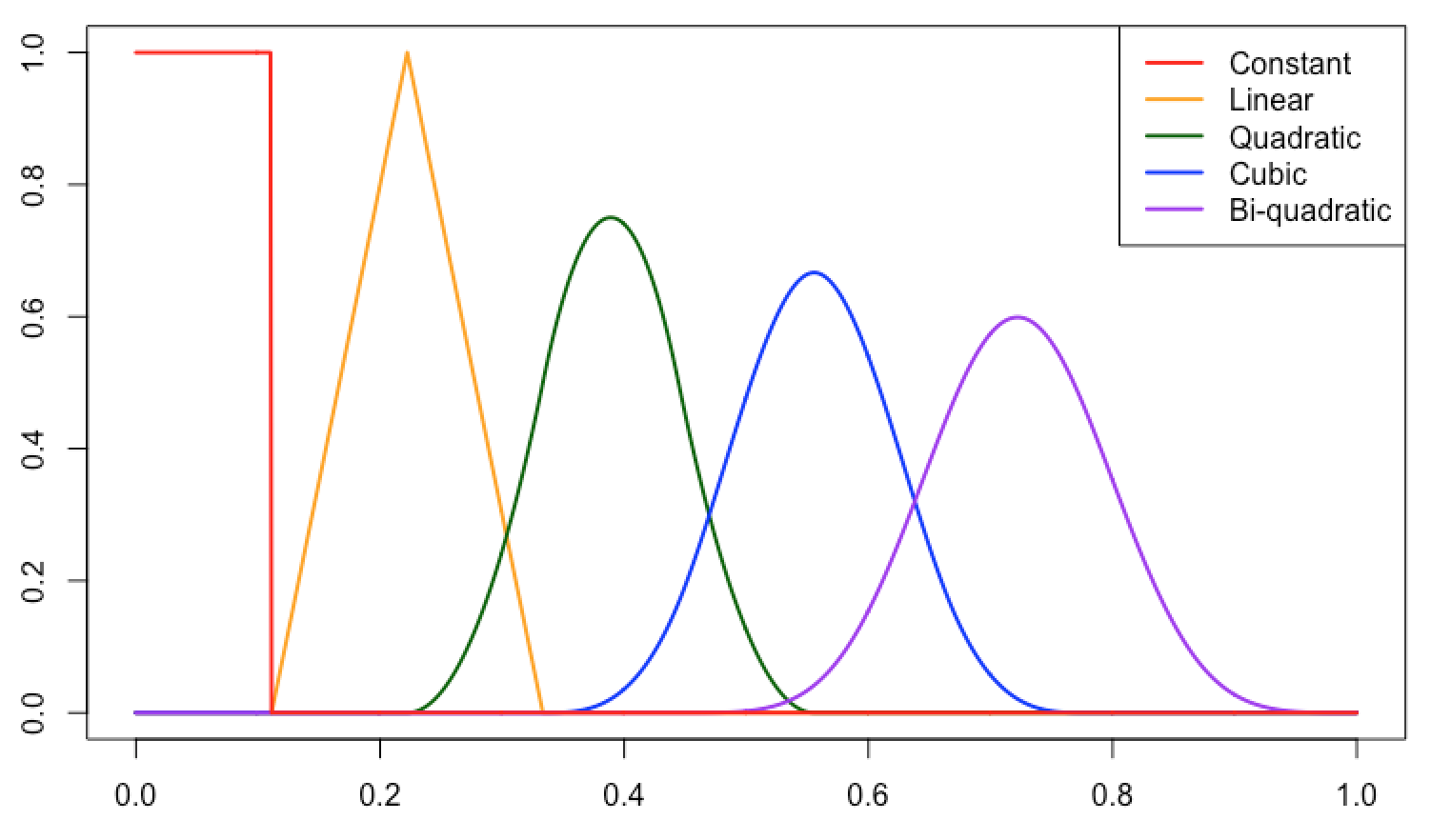}
	\caption{Different shapes of the B-spline basis function by increasing the degree $k$}
	\label{fig:bsp}
\end{figure}


\subsection{Model Specification}
The LARK model with one type of kernel can not estimate well functions with both continuous and discontinuous parts. To improve this,  we consider various a B-spline basis functions simultaneously for estimating all parts of the unknown function. The new model uses B-spline basis to systematically generate an overcomplete system with varying degrees of smoothness. For example, the  B-spline basis functions of degrees 0, 1 and 2 or more are for jumps, sharp peaks and smooth parts of the function, respectively.

We consider the mean function can be expressed as a random finite sum:
\begin{equation}
\eta(x) = \sum_{k \in S }\sum_{1 \leq l \leq J_k} B_{k}(x ; \boldsymbol{\xi}_{k,l}) \beta_{k,l}, \\
\label{eq:meanfunc}
\end{equation}
where $S$ denotes the subset of degree numbers of B-spline basis and $B_k(x; \boldsymbol{\xi}_k)$ is a B-spline basis of degree $k$ with knots, $\boldsymbol{\xi}_k  \in \mathcal{X}^{(k+2)} := \Omega$. Generating functions of the LARK model are replaced by the B-spline basis functions. $J_k$ has a Poisson distribution with $M_k > 0$ and $\{(\beta_{k,l}, \boldsymbol{\xi}_{k,l})\} \stackrel{iid}{\sim} \pi_k(d\beta_k, d \boldsymbol{\xi}_k) := \nu_k(d \beta_k, d \boldsymbol{\xi}_k)/\nu_k(\mathbb{R} \times \Omega$).  In this paper, we assume
\begin{equation*}
\pi_k(d \beta_k, d\boldsymbol{\xi}_k) = \mathcal{N}(\beta_k; 0,\phi_k^2 ) \, d\beta_k \cdot \mathcal{U}(\boldsymbol{\xi}_k;\mathcal{X}^{(k+2)})d \boldsymbol{\xi}_k.
\end{equation*}

The mean function can be also defined as 
\begin{equation}
\eta(x) \equiv \sum_{k \in S}  \int_{\Omega} B_k(x; \boldsymbol{\xi}_k) L_k(d\boldsymbol{\xi}_k).
\end{equation}
The stochastic integral representation of the mean function is determined by
\begin{equation*}
L_k \sim  \text{L\'{e}vy}(\nu_k(d \beta_k, d \boldsymbol{\xi}_k)), \quad \forall k \in S,
\end{equation*}
where $\nu_k(d \beta_k, d \boldsymbol{\xi}_k)$ is a  finite L\'{e}vy measure satisfying $M_k \equiv \nu_k(\mathbb{R} \times \Omega) < \infty$.
Although the L\'{e}vy measure $\nu_k$ satisfying \eqref{eq:cond1} may be infinite, the Poisson integrals and sums aboves are well defined for all bounded measurable compactly-supported $B_k(\cdot,\cdot)$ for which for all $k \in S$,
\begin{equation}
\int \int_{\mathbb{R} \times \Omega} (1 \wedge |\beta_k B_k(\cdot; \boldsymbol{\xi}_k)|) \nu_k(d \beta_k, d \boldsymbol{\xi}_k) < \infty.
\label{eq:PMcond2}
\end{equation}
 In this paper, we consider only finite L\'{e}vy measures in the proposed model. In other words, we restrict our attention to the L\'{e}vy measure of a compound Poisson process. The new model is  more complex than the LARK model with one kernel and expected to give a more accurate estimate of the regression function. It can estimate a mean function having both smooth and peak shapes. The proposed model can write in hierarchical form as
\begin{equation}
\begin{aligned}
Y_i|x_i &\stackrel{ind}{\sim} \mathcal{N}(\eta(x_i),\sigma^2), \quad i = 1,2, \cdots, n, \\
\eta(x) &= \beta_0 + \sum_{k \in S }\sum_{1 \leq l \leq J_k} B_{k}(x ; \boldsymbol{\xi}_{k,l}) \beta_{k,l}, \\
\sigma^2 &\sim \text{IG}\left(\dfrac{r}{2}, \dfrac{rR}{2}\right), \\
 J_k &\sim \text{Poi}(M_k),\\
 M_k &\sim \text{Ga}(a_{\gamma_k}, b_{\gamma_k)}, \\
 \beta_{k,l} &\stackrel{iid}{\sim} \mathcal{N}(0, \phi^2_{k}), \quad l = 1,2, \cdots, J_k, \\
 \boldsymbol{\xi}_{k,l}  &\stackrel{iid}{\sim} \mathcal{U}(\mathcal{X}^{(k+2)}), \quad l = 1,2, \cdots, J_k,
 \end{aligned}
 \label{eq:mdsp}
 \end{equation}
for $k \in S$. We set $\beta_0 = \overline{Y}$ and $\phi_k = 0.5 \times (\max_i\{Y_i\} - \min_i\{Y_i\})$.


\subsection{Support of LABS model}\label{subsec:basicideas}

In this section, we present three theorems on the support of the LABS model. We first define the modulus of smoothness and Besov spaces. 
\begin{definition}
Let $0 < p \leq \infty$ and $h > 0$. For $f \in L^p(\calX)$, the $r$th order modulus of smoothness of $f$ is defined by
 $$
 \omega_r(f,t)_p := \sup_{h \leq t} \|\Delta_h^r f\|_p,
 $$
where $\Delta_h^r f(x) = \sum_{k=0}^r \binom{r}{k}(-1)^{r-k} f(x + kh)$ for $x \in \calX$ and $x + kh \in \calX$.
\end{definition}
If $r = 1$, $\omega_1(f,t)_p$ is the modulus of continuity. There exist equivalent definitions in defining Besov spaces. We follow \cite{devore1993constructive}[2.10, page 54].

\begin{definition}
Let $\alpha > 0$ be given and let $r$ be the smallest integer such that $r > \alpha$. For $0 < p,q < \infty$, the Besov space  $\mathbb{B}^\alpha_{p,q}$ is the collection of all functions $f \in L_p(\calX)$ such that
$$
|f|_{\mathbb{B}^\alpha_{p,q}} = \left( \int_0^\infty [t^{-\alpha}\omega_r(f,t)_p]^q \frac{dt}{t} \right)^{1/q}
$$
is finite. The norm on  $\mathbb{B}^\alpha_{p,q}$ is  defined as 
$$
\|f\|_{\mathbb{B}^\alpha_{p,q}} = \|f\|_{p} + |f|_{\mathbb{B}^\alpha_{p,q}}.
$$
\end{definition}
The Besov space is a general function space depending on the smoothness of functions in $L_p(\calX)$ and especially can allow smoothness of spatially inhomogeneous functions, including spikes and jumps. The Besov space has three parameters, $\alpha$, $p$, and $q$, where $\alpha$ is the degree of smoothness, $p$ represents that $L_p(\Omega)$ is the function space where smoothness is measured, and $q$ is a parameter for a finer tuning on the degree of smoothness.

\begin{theorem}\label{thm:1}
For fixed $k \in S$ and $\boldsymbol{\xi}_k \in \calX^{(k+2)}$,  the B-spline basis $B_k(x; \boldsymbol{\xi}_k)$ falls in $\mathbb{B}_{p,q}^\alpha(\calX)$ for all $1 \leq p, q < \infty$ and $\alpha < k + 1/p$. 
\label{thm:bspinbsv}
\end{theorem}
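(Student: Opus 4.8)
The plan is to verify membership directly from the definition of the Besov seminorm by controlling the modulus of smoothness of $B_k$. First I would note that $B_k(\cdot\,;\bxi_k)$ is bounded and supported on the compact interval $[\xi_{k,1},\xi_{k,k+2}]$, so $B_k\in L_p(\calX)$ for every $1\le p<\infty$; this is what makes the seminorm well defined. Since $\alpha<k+1/p\le k+1$, the order $r$ appearing in the definition of $|\cdot|_{\mathbb{B}^\alpha_{p,q}}$ satisfies $r\le k+1$, and by the standard equivalence of the Besov seminorms built from $\omega_r$ for any integer $r>\alpha$ (see \cite{devore1993constructive}), I may compute the seminorm using the single fixed order $r=k+1$ throughout. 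Everything then reduces to a sharp small-$t$ bound on $\omega_{k+1}(B_k,t)_p$.

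The heart of the argument is the estimate $\omega_{k+1}(B_k,t)_p \lesssim t^{\,k+1/p}$ for small $t$. Fix a step $0<h\le t$ and consider $\Delta_h^{k+1}B_k(x)$. On any interval $[x,x+(k+1)h]$ that avoids all the knots, $B_k$ agrees with a single polynomial of degree $k$, and since the $(k+1)$st forward difference annihilates polynomials of degree $\le k$, we get $\Delta_h^{k+1}B_k(x)=0$ there. Hence $\Delta_h^{k+1}B_k(x)$ can be nonzero only when $[x,x+(k+1)h]$ contains one of the $k+2$ knots, i.e. on a set of Lebesgue measure at most $(k+2)(k+1)h$. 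Near a simple knot $\xi$ the $C^{k-1}$-continuity of $B_k$ lets me write $B_k(x)=P(x)+c\,(x-\xi)_+^{k}$ locally, with $P$ a polynomial of degree $\le k$; the polynomial part is annihilated, and for $x$ in the relevant range each surviving term $(x+jh-\xi)_+^{k}$ is bounded by $((k+1)h)^k$, giving the pointwise bound $|\Delta_h^{k+1}B_k(x)|\lesssim h^{k}$. Combining the pointwise bound with the measure bound yields $\|\Delta_h^{k+1}B_k\|_p^p \lesssim (h^{k})^p\cdot h = h^{\,kp+1}$, hence $\omega_{k+1}(B_k,t)_p\lesssim t^{\,k+1/p}$; for large $t$ the trivial bound $\omega_{k+1}(B_k,t)_p\le 2^{k+1}\|B_k\|_p$ suffices.

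Finally I would insert these two regimes into the seminorm and split the integral at $t=1$. On $(0,1]$ the integrand is $\lesssim [t^{-\alpha}t^{k+1/p}]^q t^{-1}=t^{(k+1/p-\alpha)q-1}$, which is integrable precisely because $\alpha<k+1/p$; on $[1,\infty)$ the integrand is $\lesssim t^{-\alpha q-1}$, integrable since $\alpha>0$. Thus $|B_k|_{\mathbb{B}^\alpha_{p,q}}<\infty$ and $B_k\in\mathbb{B}^\alpha_{p,q}(\calX)$, as claimed. The step I expect to be the real obstacle is the sharp local estimate $|\Delta_h^{k+1}B_k|\lesssim h^{k}$: the naive bound using $\|B_k\|_\infty$ only yields $t^{1/p}$ and fails, so one must genuinely exploit the $C^{k-1}$ smoothness through the truncated-power decomposition. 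A clean alternative that sidesteps the constants is the lifting property $f\in\mathbb{B}^{\beta+k}_{p,q}\iff f^{(k)}\in\mathbb{B}^{\beta}_{p,q}$, combined with the fact that the piecewise-constant derivative $B_k^{(k)}$ is a compactly supported step function lying in $\mathbb{B}^{\beta}_{p,q}$ for all $\beta<1/p$. I would also remark that coincident knots lower the smoothness of $B_k$ and hence the attainable exponent, so $k+1/p$ is the exponent associated with the (generic) simple-knot configuration.
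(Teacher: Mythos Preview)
Your argument is correct, and the overall architecture---obtain $\omega_{k+1}(B_k,t)_p\lesssim t^{k+1/p}$ for small $t$, use the trivial bound for large $t$, then split the seminorm integral at $t=1$---matches the paper exactly. The genuine difference is in how the small-$t$ modulus estimate is proved. You argue directly: the $(k+1)$st difference kills the piecewise degree-$k$ polynomial away from the knots, and near a knot the truncated-power remainder $c(x-\xi)_+^k$ gives the pointwise bound $|\Delta_h^{k+1}B_k|\lesssim h^k$ on a set of measure $O(h)$. The paper instead invokes the reduction inequality $\omega_{r}(f,t)_p\le t^{k}\,\omega_{r-k}(f^{(k)},t)_p$ together with $\omega_{r-k}\le 2^{r-k-1}\omega_1$, so the whole question collapses to the modulus of continuity of the piecewise-constant function $B_k^{(k)}$, which is handled by the $k=0$ base case. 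In other words, the paper's route is precisely the ``clean alternative'' you sketch at the end via the lifting/derivative property. Your direct approach is a bit more hands-on but entirely self-contained and makes the role of the $C^{k-1}$ smoothness at the knots explicit; the paper's approach is shorter because it recycles the base case and a standard modulus identity. Either way the threshold $\alpha<k+1/p$ drops out of the same integral computation.
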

The proof is given in \autoref{sec:appendA}. For instance, the B-spline basis with degree 0 satisfies $B_k(\cdot , \boldsymbol{\xi}_k) \in \mathbb{B}_{p,q}^\alpha$ for $\alpha < 1/p$, the B-spline basis with degree 1 is in $ \mathbb{B}_{p,q}^\alpha$ for $1 + 1/p$ and the B-spline basis with degree 2 falls in $ \mathbb{B}_{p,q}^s$ for $2 + 1/p$. 

The following theorem describes the mean function of the LABS model, $\eta$, is in a Besov space with smoothness corresponding to degrees of B-spline bases used in the LABS model. The proof of the theorem closely follows that of \cite{wolpert2011stochastic}. The  proof of \autoref{thm:2} is  given  in  \autoref{sec:appendA}. 
\begin{theorem} \label{thm:2} 
Suppose $\mathcal{X}$ is a compact subset of $\mathbb{R}$. Let $\nu_k$ be a L\'{e}vy measure  on $\mathbb{R} \times \mathcal{X}^{(k+2)}$ that satisfies the following integrability condition,
\begin{equation}
\int \int_{\mathbb{R} \times \mathcal{X}^{(k+2)}} (1 \wedge |\beta_k|) \nu_k (d\beta_k, d \boldsymbol{\xi}_k) < \infty.
\label{eq:PMcond1}
\end{equation}
and  $L_k \sim$ L\'{e}vy$(\nu_k)$ for all $k \in S$. Define the mean function of the LABS model, $\eta(\cdot) = \sum_{k \in S} \int_{\mathcal{X}^{(k+2)}} B_k(x; \boldsymbol{\xi}_k) L_k(d \boldsymbol{\xi}_k)$ on $\mathcal{X}$ where $B_k(x; \boldsymbol{\xi}_k)$ satisfies \eqref{eq:PMcond1} for each fixed $x \in \mathcal{X}$. Then, $\eta$ has the convergent series
\begin{equation}
\eta(x) = \sum_{k \in S} \sum_{l}^{} B_{k} (x; \boldsymbol{\xi}_{k,l}) \beta_{k,l}
\label{eq:PMdef}
\end{equation}
where $S$ is a finite set including degree numbers of B-spline basis. Furthermore, $\eta$ lies in the Besov space $\mathbb{B}_{p,q}^\alpha(\mathcal{X})$ with $\alpha < \min(S) + \frac{1}{p}$ almost surely.
\end{theorem}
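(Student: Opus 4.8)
The plan is to exploit the finite (compound Poisson) structure of the Lévy measures to reduce $\eta$ to a finite random sum, and then to read off Besov membership term by term from \autoref{thm:1}, patching the degrees together using the vector-space structure of $\mathbb{B}_{p,q}^\alpha$. First I would use that each $\nu_k$ is finite, $M_k = \nu_k(\mathbb{R}\times\mathcal{X}^{(k+2)}) < \infty$, so that $L_k$ admits the atomic representation $L_k(d\boldsymbol{\xi}_k) = \sum_{l=1}^{J_k}\beta_{k,l}\,\delta_{\boldsymbol{\xi}_{k,l}}$ with $J_k \sim \mathrm{Pois}(M_k)$ and $(\beta_{k,l},\boldsymbol{\xi}_{k,l}) \overset{iid}{\sim} \nu_k/M_k$, exactly as in \eqref{eq:PR2}; the integrability condition \eqref{eq:PMcond1} is what guarantees the Poisson integral defining $\eta(x)$ converges for each fixed $x$. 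Substituting this representation into the stochastic integral gives $\int_{\mathcal{X}^{(k+2)}} B_k(x;\boldsymbol{\xi}_k)\,L_k(d\boldsymbol{\xi}_k) = \sum_{l=1}^{J_k} B_k(x;\boldsymbol{\xi}_{k,l})\beta_{k,l}$, and summing over the finite set $S$ yields the series \eqref{eq:PMdef}. Since $J_k < \infty$ almost surely and $S$ is finite, this is a finite sum almost surely, which already proves the first assertion.

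Next I would place each summand in the target space. Almost surely the components of a draw $\boldsymbol{\xi}_{k,l}\sim\mathcal{U}(\mathcal{X}^{(k+2)})$ are distinct, so after ordering they form a strictly increasing knot vector and $B_k(\cdot;\boldsymbol{\xi}_{k,l})$ is well defined; \autoref{thm:1} then gives $B_k(\cdot;\boldsymbol{\xi}_{k,l}) \in \mathbb{B}_{p,q}^\alpha(\mathcal{X})$ for every $\alpha < k + 1/p$. The point that pins down the exponent is the elementary observation that $\min(S) \leq k$ for all $k \in S$: hence any $\alpha < \min(S)+1/p$ satisfies $\alpha < k + 1/p$ simultaneously for every degree $k$ appearing in the model, so all atoms, regardless of degree, lie in the common space $\mathbb{B}_{p,q}^\alpha$. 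This is exactly where the binding smoothness $\min(S)+1/p$ comes from, and conceptually it reflects the Besov nesting $\mathbb{B}_{p,q}^{\alpha'} \subset \mathbb{B}_{p,q}^{\alpha}$ for $\alpha' \geq \alpha$.

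I would then assemble the finite sum using subadditivity. Since $\Delta_h^r$ is linear, the modulus of smoothness satisfies $\omega_r(f+g,t)_p \leq \omega_r(f,t)_p + \omega_r(g,t)_p$, and with $q \geq 1$ Minkowski's inequality in $L_q((0,\infty),dt/t)$ makes the Besov seminorm subadditive; combined with the triangle inequality for $\|\cdot\|_p$ this gives the triangle inequality for $\|\cdot\|_{\mathbb{B}_{p,q}^\alpha}$. Applying it to the finite sum yields $\|\eta\|_{\mathbb{B}_{p,q}^\alpha} \leq \sum_{k\in S}\sum_{l=1}^{J_k}|\beta_{k,l}|\,\|B_k(\cdot;\boldsymbol{\xi}_{k,l})\|_{\mathbb{B}_{p,q}^\alpha}$, a finite sum of finite quantities, hence finite almost surely. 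Therefore $\eta \in \mathbb{B}_{p,q}^\alpha(\mathcal{X})$ almost surely for all $\alpha < \min(S)+1/p$, which is the second assertion.

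The step I expect to require the most care, and the one that genuinely uses the integrability hypothesis rather than mere finiteness, is this last assembly step in its general form. If one relaxes finiteness of $\nu_k$ and assumes only \eqref{eq:PMcond1}, the sum over $l$ becomes infinite and $\sum_l|\beta_{k,l}|$ diverges, so the crude triangle-inequality bound above fails; following \cite{wolpert2011stochastic} one would instead need a uniform-in-knots bound $\sup_{\boldsymbol{\xi}_k}\|B_k(\cdot;\boldsymbol{\xi}_k)\|_{\mathbb{B}_{p,q}^\alpha} \leq C_k < \infty$ (available because $\mathcal{X}$ is compact and the seminorm depends continuously on the knots) together with a split into finitely many large jumps and a controllable small-jump part, the latter handled through $\int(1\wedge|\beta_k|)\,\nu_k < \infty$ rather than $\int|\beta_k|\,\nu_k$. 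Because the model is restricted to finite compound Poisson measures, this refinement is unnecessary and the finite-sum argument closes the proof, but it is the one place where the convergence, and not just membership, is nontrivial.
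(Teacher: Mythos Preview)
Your argument is essentially the same as the paper's: both bound $\|\eta\|_p$ and $|\eta|_{\mathbb{B}_{p,q}^\alpha}$ by $\sum_{k\in S}\sum_l |\beta_{k,l}|\cdot\|B_k(\cdot;\boldsymbol{\xi}_{k,l})\|$ via subadditivity, invoke \autoref{thm:1} to make each term finite when $\alpha < k+1/p$, and then intersect over $k\in S$ to obtain the binding exponent $\min(S)+1/p$. The paper simply outsources the subadditive bound to Theorem~3 of \cite{wolpert2011stochastic}, whereas you derive it directly from linearity of $\Delta_h^r$ and Minkowski's inequality; your route is more self-contained and also makes the first assertion (the series representation) explicit, which the paper essentially leaves implicit.

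The one point worth flagging is scope. The hypothesis of the theorem as stated is only the integrability condition \eqref{eq:PMcond1}, which does not force $\nu_k$ to be finite, yet your main argument begins by assuming $M_k<\infty$ so that the sum over $l$ is almost surely finite. The paper's appeal to \cite{wolpert2011stochastic} is what, in principle, covers the infinite-activity case; you correctly identify in your last paragraph that the crude bound $\sum_l|\beta_{k,l}|\cdot C_k$ fails there and that a large-/small-jump split controlled by $\int(1\wedge|\beta_k|)\,\nu_k$ is needed, but you do not carry this out. If you are proving the theorem under the paper's standing restriction to finite compound Poisson measures, your argument is complete; if you want the theorem at the stated generality, you must either invoke the Wolpert result as the paper does or execute the refinement you sketched.
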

 For example, if a zero element is included in $S$  then the mean function of the LABS, $\eta$ falls in $\mathbb{B}^\alpha_{p,q}$ with  $\alpha < \frac{1}{p}$ almost surely, which consists of functions no longer continuous. If $S = \{3,5,8\}$, then, $\eta$ belongs to $\mathbb{B}^\alpha_{p,q}$  with $\alpha < 3 + \frac{1}{p}$ almost surely. Moreover, it is highly possible that the function spaces for the LABS model may be larger than those of the LARK model using one type of kernel function. Specifically, the mean function for the LABS model with $S = \{0, 1\}$ falls in $\mathbb{B}^\alpha_{p,p}$ with $\alpha < \frac{1}{p}$ almost surely. If that of the LARK model using only one Laplacian kernel falls in $\mathbb{B}^\alpha_{p,p}$ with $\alpha < 1 + \frac{1}{p}$ , then the function spaces of the LABS model with given $\alpha < \frac{1}{p}$ are larger than those of the LARK model for the range of smoothness parameter, $\frac{1}{p} < \alpha < 1 + \frac{1}{p}$ by the properties of the Besov space.

The next theorem shows that the prior distribution of our LABS model has sufficiently large support on the Besov space $\mathbb{B}^\alpha_{p,q}$ with $1 \leq p, q < \infty$ and $\alpha > 0$. For $\eta_0 \in \mathbb{B}^\alpha_{p,q}(\mathcal{X})$, denote the ball around $\eta_0$ of radius $\delta$,
$$
\bar{b}_\delta(\eta_0) = \{\eta: \| \eta - \eta_0 \|_{p} < \delta \}
$$
where $\| \cdot \|_{p}$ is a $L_p$ norm. \ech The proof of \autoref{thm:3} is given in \autoref{sec:appendA}.
\begin{theorem}\label{thm:3}
Let $\calX$ be a bounded domain in $\mathbb{R}$. Let $\nu_k$ be a finite measure on $\mathbb{R} \times \mathcal{X}^{(k+2)}$  and  $L_k \sim Levy(\nu_k)$ for all $k \in S$. Suppose $\eta$ has a prior $\Pi$ for the LABS model \eqref{eq:mdsp}.  Then, $\Pi(\bar{b}_\delta(\eta_0)) > 0$ for every $\eta_0 \in  \mathbb{B}^\alpha_{p,q}(\mathcal{X})$ and all $\delta > 0$.
\end{theorem}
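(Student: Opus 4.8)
The plan is to exploit the fact that the ball $\bar{b}_\delta(\eta_0)$ is measured in the $L_p$ norm rather than the Besov norm, so that it suffices to approximate $\eta_0$ in $L_p(\calX)$ by a reachable finite B-spline sum and then show that the prior charges every $L_p$-neighborhood of that sum. I would therefore split the argument into a deterministic approximation step and a probabilistic positivity step.

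First, the approximation step. Since $\eta_0 \in \mathbb{B}^\alpha_{p,q}(\calX) \subset L_p(\calX)$ and $\calX$ is bounded, I would invoke classical spline approximation theory: for a fixed degree $k \in S$, splines on a sufficiently fine uniform partition of $\calX$ approximate continuous functions uniformly, and continuous functions are dense in $L_p(\calX)$ for $1 \leq p < \infty$. Hence there is a spline $\eta^*$ with $\|\eta^* - \eta_0\|_{p} < \delta/2$. The key observation is that each element of the standard B-spline basis on that partition is exactly a function $B_k(\cdot; \boldsymbol{\xi})$ for a block of $k+2$ consecutive knots, so $\eta^*$ can be written as a finite sum $\eta^* = \sum_{l=1}^{m} B_k(\cdot; \boldsymbol{\xi}^*_l)\beta^*_l$, precisely the form produced by the LABS mean \eqref{eq:meanfunc} when only degree $k$ is active. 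The fixed additive constant $\beta_0$ in \eqref{eq:mdsp} is a shift and can be absorbed by approximating $\eta_0 - \beta_0$ instead.

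Second, the positivity step. I would show $\Pi(\{\eta : \|\eta - \eta^*\|_{p} < \delta/2\}) > 0$, which by the triangle inequality gives the claim for $\bar{b}_\delta(\eta_0)$. Condition on the event $\{J_k = m,\ J_{k'} = 0 \text{ for } k' \neq k\}$; because $J_k \mid M_k \sim \mathrm{Poisson}(M_k)$ with $M_k > 0$ almost surely, every such configuration has strictly positive probability. Given this event, the mean function is the image under the map $\Phi:(\beta_1,\ldots,\beta_m,\boldsymbol{\xi}_1,\ldots,\boldsymbol{\xi}_m)\mapsto \sum_{l=1}^m B_k(\cdot;\boldsymbol{\xi}_l)\beta_l$ into $L_p(\calX)$, which is continuous: $\beta \mapsto B_k(\cdot;\boldsymbol{\xi})\beta$ is obviously continuous, and $\boldsymbol{\xi} \mapsto B_k(\cdot;\boldsymbol{\xi})$ is continuous in $L_p$ at any knot vector with distinct, properly ordered entries. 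By continuity there is an open neighborhood $U$ of $(\beta^*,\boldsymbol{\xi}^*)$ with $\Phi(U)\subset\{\|\eta - \eta^*\|_{p} < \delta/2\}$, and since the coefficient prior $\mathcal{N}(0,\phi_k^2)$ has density positive on all of $\mathbb{R}$ while the knot prior $\mathcal{U}(\calX^{(k+2)})$ has density positive on the interior, the prior probability of $U$ is strictly positive, completing the bound.

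The main obstacle I anticipate is the continuity of the knot-to-B-spline map in $L_p$ together with the ordering constraint: the uniform prior on $\calX^{(k+2)}$ does not enforce $\xi_1 \leq \cdots \leq \xi_{k+2}$, and the Cox--de Boor recursion \eqref{eq:bsp2} is only well-behaved for ordered knots, with degeneracies when knots coalesce. I would resolve this by taking the target knots $\boldsymbol{\xi}^*_l$ strictly increasing and interior, so that on a small enough $U$ the knots remain ordered and distinct, on which $B_k(\cdot;\boldsymbol{\xi})$ depends continuously on $\boldsymbol{\xi}$ in $L_p$; the uniform prior still assigns positive mass to this ordered sub-neighborhood. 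A secondary point to check is that the standard B-spline basis functions really are captured by single-index kernels $B_k(\cdot;\boldsymbol{\xi})$, i.e. that the LABS dictionary contains a spanning set for each spline space, which is immediate from \eqref{eq:bsp2}.
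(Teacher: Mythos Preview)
Your proposal is correct and follows the same two-step architecture as the paper---first approximate $\eta_0$ in $L_p$ by a finite B-spline sum, then show the prior charges an $L_p$-neighborhood of that sum via positivity of the parameter densities---but your execution of both steps is cleaner. For the approximation step, the paper invokes Petrushev's quantitative spline approximation in Besov spaces, which produces splines of degree $q-1$ (with $q$ the Besov index, not necessarily in $S$), and then has to argue separately that LABS bases $B_k$, $k\in S$, with sufficiently small knot ranges are $L_p$-close to those; you bypass this detour by observing that splines of any fixed degree $k\in S$ are already dense in $L_p(\calX)$, so the approximant can be taken in the LABS dictionary from the outset. For the positivity step, the paper writes out explicit product-form lower bounds on the event that each $J_k$ equals a prescribed value and each $(\beta_{k,l},\boldsymbol{\xi}_{k,l})$ lands in a small box, whereas you package this as continuity of the parameter-to-function map $\Phi$ into $L_p$; both are valid, and your handling of the knot-ordering issue (choose strictly increasing interior target knots so a small neighborhood stays in the ordered region where $\boldsymbol{\xi}\mapsto B_k(\cdot;\boldsymbol{\xi})$ is $L_p$-continuous) is exactly the right fix and is left implicit in the paper. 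The net effect is that your route is more elementary---no Petrushev theorem, no Besov embedding---while the paper's route carries a quantitative rate that is unnecessary for the qualitative conclusion $\Pi(\bar b_\delta(\eta_0))>0$.
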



\section{Algorithm}\label{sec:algorithm}

Based on the prior specifications and the likelihood function,  the joint posterior distribution of the LABS model \eqref{eq:mdsp} is
	\begin{align}
	 \label{eq:mdsp1}
	 [\boldsymbol{\beta}, \boldsymbol{\xi},\boldsymbol{J},\boldsymbol{M},\sigma^2 \,|\, \boldsymbol{Y}] & \propto  [\boldsymbol{Y}\,|\,\eta ,\sigma^2] \times [\boldsymbol{\beta}, \boldsymbol{\xi} \,|\, \boldsymbol{J}] \times [\boldsymbol{J} \,|\, \boldsymbol{M}] \times [\boldsymbol{M}] \times[\sigma^2]  \nonumber\\
	& \propto \left[ (\sigma^2)^{-n/2} \exp \left\{-\dfrac{1}{2 \sigma^2}\sum_{i=1}^{n}(Y_i - \beta_0 - \sum_{k \in S}\sum_{l=1}^{J_k} B_{k}(x_i ; \boldsymbol{\xi}_{k,l}) \beta_{k,l})^2 \right \} \right]   \nonumber \\
	& \times \prod_{k \in S}\left[ \exp \left\{ -\dfrac{1}{2 \sigma^2_{k}} \sum_{l=1}^{J_k}\beta_{k,l}^2 \ \right\} \right] \times \prod_{k \in S} \left[ \dfrac{1}{|\mathcal{X}^{(k+2)}|^{J_k}} \prod_{l=1}^{J_k} I(\boldsymbol{\xi}_{k,l} \in \mathcal{X}^{(k+2)} )\right]  \nonumber \\
	& \times \prod_{k \in S}  \left[\dfrac{M_k^{J_k}}{J_k !} \exp\{-M_k\}  \right]\times \prod_{k \in S}\left[ M_k^{a_{\gamma_k} -1} \exp\{-b_{\gamma_k} M_k\}\right]  \nonumber \\
	&\times\left[ (\sigma^2)^{-\frac{r}{2} +1} \exp \left\{-\dfrac{rR}{2\sigma^2}\right\} \right].
	\end{align}
The parameters $\boldsymbol{\beta}$ and $\boldsymbol{\xi}$ of the LABS model have varying dimensions as $J_k$ is a random variable. We use the Reversible Jump Markov Chain Monte Carlo (RJMCMC) algorithm  \citep{green1995reversible}  for the posterior computation.

We consider three transitions in the generation of posterior samples: (a) the addition of basis functions and coefficients; (b) the deletion of basis functions and coefficients; (c) the relocation of knots which affects the shape of basis functions and coefficients. Note that in step (c) the  numbers of basis functions and coefficients do not change. We call such move types birth step, death step, and relocation step, respectively. A type of move is determined with probabilities $p_b$, $p_d$ and $p_w$ with $p_b + p_d + p_w = 1$, where $p_b$, $p_d$ and $p_w$ are probabilities of choosing the birth, death, and relocation steps, respectively.

Let us denote $\theta_{k,l} = (\beta_{k,l}, \boldsymbol{\xi}_{k,l})$ by an element  of $\boldsymbol{\theta}_k = \{\theta_{k,1}, \theta_{k,2}, \ldots, \theta_{k,j},\ldots, \theta_{k,J_k}\}$, where each  $ \boldsymbol{\xi}_{k,l}$  has the  $(k+2)$ dimensions. In general, the acceptance ratio of the RJMCMC can be expressed as
\begin{equation*}
A = \min\left[1,  \text{(likelihood ratio)} \times \text{(prior ratio)} \times \text{(proposal ratio)} \times \text{(Jacobian)} \right].
\end{equation*}
In our problem the acceptance ratio for each move types is given by
\begin{equation}
\label{eq:AR}
A = \min\left[1, \frac{L(\mathbf{Y} | \boldsymbol{\theta}_k',J'_k) \,\Pi(\boldsymbol{\theta}_k' | J'_k) \Pi(J'_k) q(\boldsymbol{\theta}_k | \boldsymbol{\theta}_k')}{L(\mathbf{Y} | \boldsymbol{\theta}_k, J_k) \,\Pi(\boldsymbol{\theta}_k | J_k) \Pi(J_k) q(\boldsymbol{\theta}_k' | \boldsymbol{\theta}_k)}\right],
\end{equation}
where $\boldsymbol{\theta}_k$ and $J_k$ refer to the current model parameters and the number of basis functions in the current state,  $\boldsymbol{\theta}_k'$ and  $J_k'$ denote the proposed model parameters  and the number of basis functions of the new state. Here, the Jacobian is 1 in all move types. $q(\boldsymbol{\theta}_k' | \boldsymbol{\theta}_k)$ is the jump proposal distribution that proposes a new state $\boldsymbol{\theta}_k'$ given a current state $\boldsymbol{\theta}_k$. Specifically, we choose the following jump proposal density proposed by \cite{lee2020bayesian} for each move steps:
\begin{align*}
q_b(\boldsymbol{\theta}_k' | \boldsymbol{\theta}_k) &= p_b \times b(\theta_{k,J_k + 1})\times \frac{1}{J_k+1}, \\
q_b(\boldsymbol{\theta}_k' | \boldsymbol{\theta}_k) &= p_d \times \frac{1}{J_k}, \\
q_w(\boldsymbol{\theta}_k' | \boldsymbol{\theta}_k) &= p_w \times q(\theta'_{k,r} | \theta_{k,r}),
\end{align*}
where $b(\cdot)$ is a candidate distribution which proposes a new element. For death and change steps, a  randomly chosen $r$th element of $\boldsymbol{\theta}_k$ is deleted and rearranged,  respectively. The details regarding updating schemes of each move steps are as follows.

\begin{enumerate}[label=(\alph*)]
	\item \textbf{[Birth step]} Assume that the current model is composed of $J_k$ basis functions.  If the birth step is selected, a new basis function $B_{k, J_{k+1}}$ and $\theta_{k,J_{k+1}}$ is accepted with the acceptance ratio
	$$
\min\left[1, \dfrac{L(\mathbf{Y} | \boldsymbol{\theta}_k',J_k')}{L(\mathbf{Y} | \boldsymbol{\theta}_k,J_k)} \times \dfrac{\pi(\theta_{k, J_k+1}) M_k}{(J_k+1)} \times \dfrac{p_d/(J_k+1)}{(p_b \times b(\theta_{k, J_k+1}))/(J_k+1)}\right].
	$$
Especially, a coefficient $\beta_{k, J_{k+1}}$ and an ordered knot set $\boldsymbol{\xi}_{k,J_{k+1}}$ are drawn from their generating distributions and added at the end of $(\beta_{k,1},\ldots, \beta_{k, J_k})$ and $(\boldsymbol{\xi}_{k,1}, \ldots,\boldsymbol{\xi}_{k,J_{k}} )$. When $J_k = 0$,  the birth step must be exceptionally selected until $J_k$ becomes one.

	\item \textbf{[Death step]} If the death step is selected, a  $r$th element, $\theta_{k, r}$  uniformly chosen is removed from the existing set of basis functions, coefficients and ordered knot sets. We can find out the acceptance ratio for a death step similarly. The acceptance ratio is given by
$$
\min \left[1, \dfrac{L(\mathbf{Y} | \boldsymbol{\theta}_k',J_k')}{L(\mathbf{Y} | \boldsymbol{\theta}_k,J_k)} \times \dfrac{J_k}{\pi(\theta_{k,r}) M_k} \times \dfrac{(b(\theta_{k,r}) \times p_b)/J_k}{p_d /J_k}\right].
$$

\item \textbf{[Relocation step]}  Unlike the other steps, the relocation step \ech keeps the numbers of basis functions  or coefficients  or ordered knot sets fixed. Therefore, the updating scheme of this step is  a Metropolis-Hastings within Gibbs sampler.  If the relocation step is selected,  a current location $\theta_{k, r}$  is moved to a new location $\theta'_{k, r}$ generated by proposal distributions  with the acceptance ratio \eqref{eq:acprob}.  Particularly,  since knots of basis function must be in non-descending order, ${\xi}_{k,r,i}$ which is the $i$th element of an ordered knot set is sequentially replaced with a new knot location ${\xi}'_{k,r,i}$ generated by $\mathcal{U}({\xi}_{k,r,i-1}, {\xi}_{k,r,i+1}), i = 1, \ldots, (k+2)$, where ${\xi}_{k,r,0}$ and ${\xi}_{k,r,k+1}$  are boundary points of $\mathcal{X}$. That is,
each element of a specific knot set  $\boldsymbol{\xi}_{k, r} = (\xi_{k,r,1},\ldots, {\xi}_{k,r,k+2})$ is  moved to new knot locations $\boldsymbol{\xi}'_{k, r} = (\xi'_{k,r,1},\ldots, {\xi}'_{k,r,k+2})$ in turn. The corresponding acceptance ratio is given by
\begin{equation}
    \min\left[1,\dfrac{L(\mathbf{Y} | \boldsymbol{\theta}_k',J_k')}{L(\mathbf{Y} | \boldsymbol{\theta}_k,J_k)} \times \dfrac{\pi(\theta'_{k,r})}{\pi(\theta_{k,r}) } \times \dfrac{q_w(\theta_{k,r} | \theta'_{k,r})}{q_w(\theta'_{k,r} | \theta_{k,r})} \right].
    \label{eq:acprob}
\end{equation}
When using an independent proposal distribution (i.e. $q_w(\theta'_{k,r} | \theta_{k,r}) = \pi(\theta'_{k,r})$), the acceptance ratio can reduce to
$$
\min\left[1,\dfrac{L(\mathbf{Y} | \boldsymbol{\theta}_k',J_k')}{L(\mathbf{Y} | \boldsymbol{\theta}_k,J_k)} \right].
$$
Finally, $\beta'_{k,r}$ is sampled from its conditional posterior distribution by using the Gibbs sampling.
\end{enumerate}
The posterior samples of $\sigma$ and  $M_k$  can be generated from their conditional posterior distributions. See \autoref{sec:appendC}.
The pseudo-code for the proposed strategy is given in Algorithm \ref{alg:labs}.

	\begin{algorithm}
	\caption{A reversible jump MCMC algorithm for LABS}
	\label{alg:labs}
	\begin{algorithmic}[1]
	\Procedure{LABS}{$S$} \Comment{$S$: set of degree numbers}

	    \State Initialize parameters $\boldsymbol{J}, \boldsymbol{\beta},\boldsymbol{\xi},\boldsymbol{M},\sigma^2$ from prior distributions.
	  \For {iteration $i=1$ to $N$}
	    \For {$k=1$ to $|S|$} \Comment{$\boldsymbol{J} :=\{\boldsymbol{J}_1, \ldots,\boldsymbol{J}_k, \ldots,\boldsymbol{J}_{|S|}\}$}
	        \State Update $(\boldsymbol{J}_k, \boldsymbol{\beta}_k, \boldsymbol{\xi}_k)$ through a reversible jump MCMC.
	        \State Sample $M_k$ from the full conditional  $\pi(M_k | \text{others})$. \Comment{Gibbs step}
		\EndFor
	  \State Sample $\sigma^2$ from the full conditional $\pi(\sigma^2 | \boldsymbol{\beta}, \boldsymbol{\xi},\boldsymbol{J},\boldsymbol{M}, \boldsymbol{y})$. \Comment{Gibbs step}
	  \State Store $i$th MCMC samples.
	\EndFor
	\EndProcedure
	\end{algorithmic}
	\end{algorithm}


\section{Simulation Studies}\label{sec:simulation}
In this section, we evaluate the performance of the LABS model \eqref{eq:mdsp}  and competing methods on simulated data sets. First, we apply the proposed method to four standard examples: Bumps, Blocks, Doppler and Heavisine test functions introduced by  \cite{donoho1994ideal}. Second, we consider three functions that we created ourselves with jumps and peaks to assess the practical performance of the proposed model.

The simulated data sets are generated from equally spaced $x$'s on $\mathcal{X} = [0,1]$ with  sample sizes $n = 128$ and $512$. Independent normally distributed noises $\mathcal{N}(0,\sigma^2)$ are added to the true function $\eta(\cdot)$.  The root signal-to-noise ratio (RSNR) is defined as 
$$ 
\text{RSNR} := \sqrt{\frac{\int_{\mathcal{X}}(f(x) - \bar{f})^2 \, dx}{\sigma^2}},
$$ 
where $\bar{f}:= \frac{1}{|\mathcal{X}|}\int_{\mathcal{X}} f(x) \, dx$ and set at 3, 5 and 10. We also run the LABS model for 200,000 iterations, with the first 100,000 iterations discarded as burn-in and retain every 10th sample.
For comparison between the methods, we compute the mean squared errors of all methods using 100 replicate data sets for each test function. The average of the posterior curves is used for the estimate of the test function.
\begin{equation*}
\text{MSE} = \frac{1}{n}\sum_{i=1}^{n}(\eta(x_i)-\hat{\eta}(x_i))^2.
\end{equation*}


\subsection{Simulation 1 : DJ test functions}\label{subsec:ex1}

We carry out a simulation study using the benchmark test functions suggested by \cite{donoho1994ideal} often used in the field of wavelet and nonparametric function estimation. The Donoho and Johnstone test functions consist of four functions called Bumps, Blocks, Doppler and Heavisine. These test functions are composed of various shapes such as sharp peaks (Bumps), jump discontinuities (Blocks), oscillating behavior (Doppler) and jumps/peaks in smooth functions (Heavisine) (See \autoref{fig:ex1}).

\begin{figure}[ht!]
	\centering
	\includegraphics[width=0.8\textwidth]{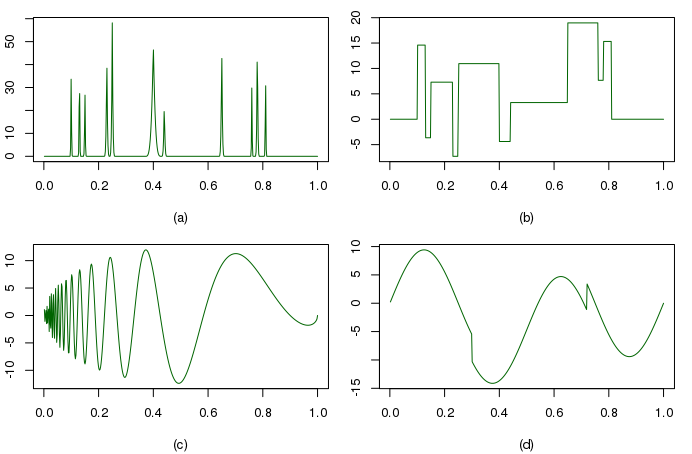}
	\caption{The Donoho and Johnstone test functions: (a) Bumps, (b) Blocks, (c) Doppler and (d) Heavisine}
	\label{fig:ex1}
\end{figure}

The hyperparameters and types of basis functions displayed in \autoref{tab:ex1-hp} were used in \eqref{eq:mdsp}.
For Bumps and Doppler, the parameter $r$ of prior distribution for $\sigma^2$ was set to 100 to speed up convergence. We also took account of the combinations of a B-spline basis based on the shapes of test functions.

\begin{table}[ht!]
\centering
\begin{tabular}{cccccc}
	\toprule
 & S & $r$ & $R$ & $a_{\gamma_{k}}$ & $b_{\gamma_{k}}$  \\
	\hline
	Bumps & \{1\} & 100 &  0.01& 1& 1\\
	Blocks & \{0\} & 0.01 & 0.01 & 1& 1\\
	Doppler & \{1,2\} & 100 & 0.01 & 1& 1\\
	Heavisine & \{0,2\} & 0.01 & 0.01 & 1&1\\
	 \bottomrule
\end{tabular}
\caption{The values of hyperparameters of proposed model for each test function}
\label{tab:ex1-hp}
\end{table}

We compared our model with a variety of methods such as B-spline curve of degree 2 with 50 knots (denoted as BSP-2), Local polynomial regression with automatic smoothing parameter selection (denoted by LOESS), Smoothing spline with smoothing parameter selected by cross-validation (denoted by SS), Nadaraya–Watson kernel regression using the Gaussian kernel with  bandwidth $h$ which minimizes CV error (denoted by NWK), Empirical Bayes approach for wavelet shrinkage using a Laplace prior with Daubechies “least asymmetric” (la8) wavelets except for the Blocks example, where it uses the Haar wavelet; \cite{johnstone2005empirical} (denoted by EBW),
Gaussian process regression with the Radial basis or Laplacian kernel (denoted by GP-R or GP-L), Bayesian curve fitting using piecewise polynomials with $l = \#1, l_0 = \#2$; \cite{denison1998automatic} (denoted by BPP-$\#1$-$\#2$), Bayesian adaptive spline surfaces with degree $\#$; \cite{francom2018sensitivity} (denoted by BASS-$\#$), and L\'{e}vy adaptive regression with multiple kernels; \cite{lee2020bayesian} (denoted by LARMuK). These competitive models are implemented in R \citep{r2020r} with various packages: LOESS \citep{wang2016rpack}, Empirical Bayes thresholding \citep{silverman2005rpack}, Gaussian process \citep{karatzoglou2004rpack}, Bayesian curve fitting using piecewise polynomials \citep{feng2013rpack}, and Bayesian adaptive spline surfaces \citep{francom2016rpack}.

\begin{figure}[ht!]
    \begin{subfigure}{\textwidth}
    \centering
    \includegraphics[width=0.95\linewidth]{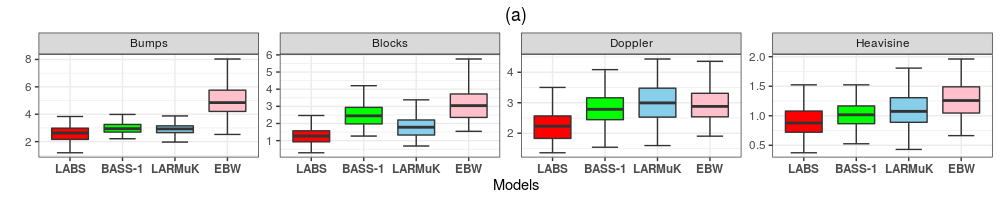}
    \label{fig:ex1-rsnr3}
    \end{subfigure}
    \begin{subfigure}{\textwidth}
    \centering
    \includegraphics[width=0.95\linewidth]{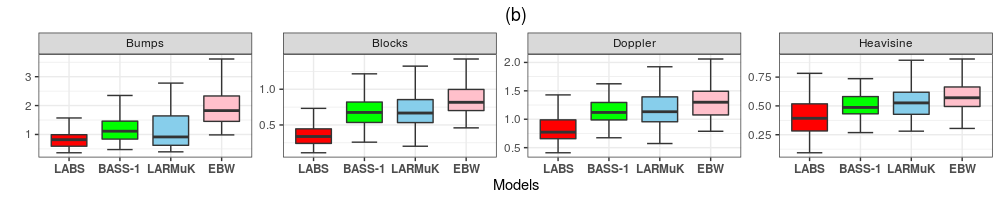}
    \label{fig:ex1-rsnr5}
    \end{subfigure}
    \begin{subfigure}{\textwidth}
    \centering
    \includegraphics[width=0.95\linewidth]{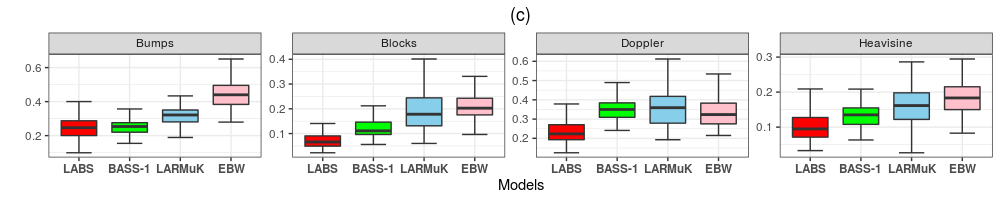}
    \label{fig:ex1-rsnr10}
    \end{subfigure}
\caption{Boxplots of MSEs from the simulation study with $n = 128$ and RSNR = (a) 3, (b) 5  and (c) 10}
\label{fig:ex1_bp128}
\end{figure}

Both \autoref{fig:ex1_bp128} and  \autoref{fig:ex1_bp512} show that the performance of our model is generally more accurate than other methods. The models in the two figures are selected by better outcomes from simulations. More detailed simulation results can be seen in \autoref{sec:appendB}.  \autoref{fig:ex1_bp128} shows that the LABS model is superior to others regardless of noise levels with $n = 128$. It also has the smallest average mean square errors for all test functions  except the Heavisine example with RSNR = 3.Similarly, for sample size $n = 512$,  the LABS model comes up with the best performance in  \autoref{fig:ex1_bp512} except for the Doppler function, where it is competitive. Our model removes high frequencies in the interval $[0, 0.1]$ and produces a smooth curve within the corresponding domain. On the contrary, due to a small number of data points in the Doppler example with $n = 128$, most models yield similar smooth curves in $[0, 0.1]$. As a result, the LABS model has an excellent numerical performance.
For Blocks example, LABS, in particular, yields the lowest average and standard deviation of mean square errors  in all scenarios. This suggests that our model has an excellent ability to find jump points. Furthermore, LABS has consistently better performance than B-spline regression using only one basis function for four simulated data sets since its overcomplete systems can be constructed by various combinations of B-spline basis functions. See \autoref{sec:appendB}.

\begin{figure}[ht!]
    \begin{subfigure}{\textwidth}
    \centering
    \includegraphics[width=0.95\linewidth]{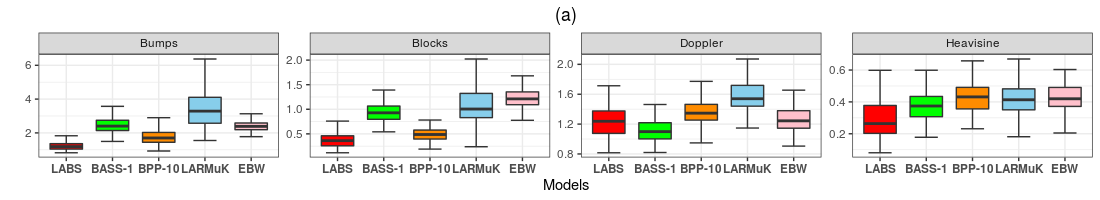}
    \label{fig:ex1-rsnr35}
    \end{subfigure}
    \begin{subfigure}{\textwidth}
    \centering
    \includegraphics[width=0.95\linewidth]{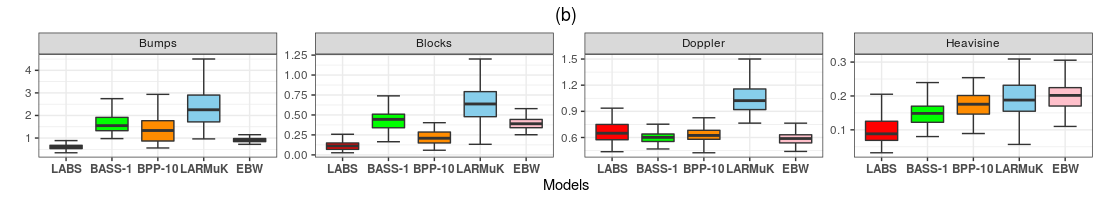}
    \label{fig:ex1-rsnr55}
    \end{subfigure}
    \begin{subfigure}{\textwidth}
    \centering
    \includegraphics[width=0.95\linewidth]{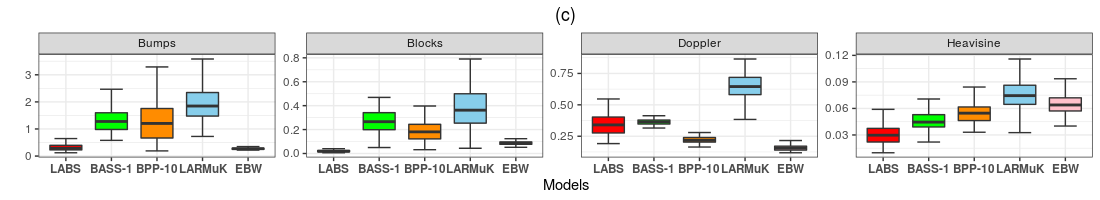}
    \label{fig:ex1-rsnr105}
    \end{subfigure}
\caption{Boxplots of MSEs from the simulation study with $n = 512$ and RSNR = (a) 3, (b) 5 and (c) 10}
\label{fig:ex1_bp512}
\end{figure}


\subsection{Simulation 2 : Smooth functions with jumps and peaks}
Our main interest lies in estimating smooth functions with either discontinuity such as jumps or sharp peaks or both. We design three test functions to assess the practical performance of the proposed method for our concerns. The first and second example is modified by adding some smooth parts, unlike the original version of the Bumps and Blocks of DJ test functions. Each test functions provided is given by
\begin{align*}
\eta_1(x) = &\dfrac{0.6}{0.92}[4 \text{ssgn}(x - 0.1) - 5\text{ssgn}(x - 0.13) + 5\text{ssgn}(x - 0.25) - 4.2\text{ssgn}(x - 0.4)\\
& + 2.1 \text{ssgn}(x - 0.44) + 4.3 \text{ssgn}(x - 0.65) - 4.2\text{ssgn}(x - 0.81) + 2] + 0.2 + \sin(8 \pi x), \\
\eta_{2}(x) = & [7K_{0.005}(x - 0.1) + 5K_{0.07}(x - 0.25)  + 4.2K_{0.03}(x - 0.4) + 4.3K_{0.01}(x - 0.65) \\
& + 5.1K_{0.008}(x - 0.78) + 3.1K_{0.1}(x - 0.9) ]+ \cos{(4\pi x)},
\end{align*}
where $\text{sgn}(x) = \text{I}_{(0,\infty)}(x) - \text{I}_{(-\infty,0)}(x)$, $\text{ssgn}(x) = 1 + \text{sgn}(x) /2$ and $K_w(x) := ( 1 + | x/w| )^{-4}$. Finally, we create a sum of jumps, peaks and some smoothness.   A formula for a final test function is
\begin{align*}
	\eta_{3}(x) &= 6\sin(4\pi x) +  7(1+\text{sgn}(x - 0.1)/2) - 7(1+\text{sgn}(x - 0.18)/2)  \\
      & - 2\text{sgn}(x-0.37) + 17K_{0.01}(x-0.5) -3\text{sgn}(x-0.72) + 10K_{0.05}(x-0.89).
\end{align*}
They are displayed in \autoref{fig:ex2}. We call in turn them modified Blocks, modified Bumps, and modified Heavisine, respectively.

\begin{figure}[ht!]
	\centering
	\includegraphics[width=\textwidth]{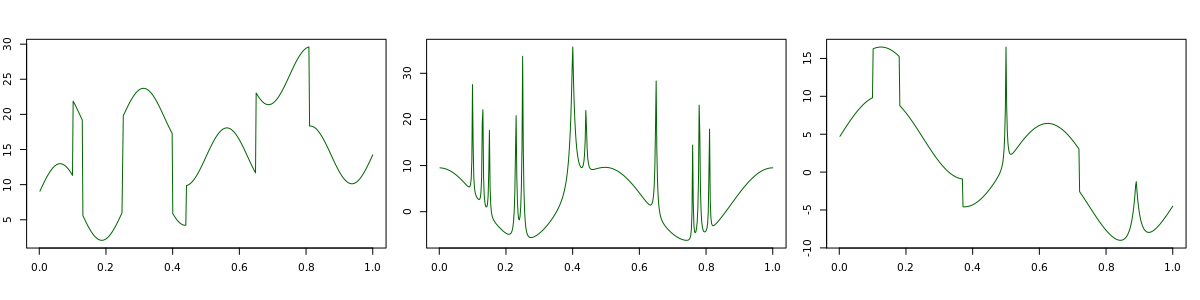}
	\caption{The three test functions used in the second simulation: Modified Blocks (left), Modified Bumps (center) and Modified Heavisine (right)}
	\label{fig:ex2}
\end{figure}

In these experiments, we use two or more types of B-spline basis as elements of overcomplete systems since three functions have different shapes, unlike previous simulation studies.
Hyperparameters are similar to the previous ones. All hyperparameters for the prior distributions are summarized in \autoref{tab:ex2-hp}. This time again, we only compare our model with BPP, BASS, EBW, and LARMuK models which have relatively good performance in some test functions of Simulation 1.
\begin{table}[ht!]
\centering
\begin{tabular}{cccccc}
	\toprule
 & S & $r$ & $R$ & $a_{\gamma_{k}}$ & $b_{\gamma_{k}}$  \\
	\hline
	Modified Blocks & \{0,2,3\} & 0.01 &  0.01& 1& 1\\
	Modified Bumps & \{1,2\} & 50 & 0.01 & 1& 1\\
	Modified Heavisine & \{0,1,2,3\} & 0.01 & 0.01 & 5&  1 \\
	 \bottomrule
\end{tabular}
\caption{Details of hyperparameters of the LABS used in second experiment}
\label{tab:ex2-hp}
\end{table}

\autoref{tab:ex2-128} furnishes that the LABS model has the best outcomes when the sample size is 128, which is difficult to estimate. Furthermore, when $n = 512$, we find out from both \autoref{tab:ex2-512} and  \autoref{fig:ex2_bp512} that the LABS model performs well in most cases with either the lowest or the second lowest average MSE values across 100 replicates. In particular, the LABS outperforms competitors in modified Blocks, irrespective of the sample size and noise levels as expected. Among all models, the worst performing method is the BASS-2 since it cannot estimate well many jumps or peak points for given test functions. \autoref{fig:ex2-comp} supports that the LABS model has the abilities to overcome the noise and adapt to  smooth functions with either discontinuity such as jumps or sharp peaks or both. 

\begin{table}[ht!]
 \centering
\adjustbox{max width=\textwidth}{%
 \footnotesize
	\begin{tabular}{cccccccccccc}
		\toprule
		\multirow{2}{*}{Model}&  \multicolumn{3}{c}{Modified Blocks}& \multicolumn{3}{c}{Modified Bumps}& \multicolumn{3}{c}{Modified Heavisine}\\
		\cmidrule{2-10}

		& RSNR=3 & RSNR=5 & RSNR=10 & RSNR=3 &RSNR=5 & RSNR=10 & RSNR=3 &RSNR=5 & RSNR=10 \\
		\midrule

EBW & 3.781(0.7407) & 1.538(0.3854) & 0.401(0.0684) & 3.921(1.0117) & 1.557(0.3191) & 0.445(0.1043) & 2.548(0.4738) & 1.326(0.2793) & 0.402(0.095) \\
  BPP-10 & 2.238(0.6436) & 0.951(0.2439) & 0.367(0.098) & 2.949(0.749) & 1.351(0.3244) & 0.631(0.2488) & 2.06(0.645) & 0.824(0.2125) & 0.287(0.0907) \\
  BPP-21 & 2.589(0.4787) & 1.336(0.2305) & 0.985(0.1714) & 3.777(0.9094) & 2.586(0.6268) & 2.39(0.6003) & 2.168(0.3821) & 1.228(0.2969) & 0.825(0.3458) \\
  BASS-1 & 2.283(0.5013) & 0.76(0.2194) & 0.172(0.0424) & 2.199(0.5625) & 0.858(0.1622) & 0.276(0.0483) & 2.013(0.5502) & 0.737(0.198) & 0.178(0.0418) \\
  BASS-2 & 4.038(0.6519) & 2.232(0.371) & 1.368(0.168) & 9.881(1.1796) & 7.944(0.7727) & 6.999(0.5772) & 3.275(0.3734) & 2.276(0.3877) & 1.378(0.2871) \\
  LARMuK & 2.158(0.5735) & 0.97(0.2352) & 0.298(0.0929) & 2.029(0.7688) & 0.822(0.2446) & 0.271(0.0944) & 1.721(0.4757) & 0.713(0.1912) & 0.219(0.075) \\
  LABS & \textbf{1.868(0.5982)} & \textbf{0.691(0.2022)} & \textbf{0.162(0.044)} & \textbf{2.01(0.6006)} & \textbf{0.803(0.1491)} & \textbf{0.248(0.0457)} & \textbf{1.589(0.5081)} & \textbf{0.635(0.1727)} & \textbf{0.172(0.0472)} \\
   \bottomrule
	\end{tabular}}
  \caption{Average of MSEs over 100 replications for three functions of Simulation 2 with $n = 128$. Estimated standard errors of MSEs are shown in parentheses}
  \label{tab:ex2-128}
\end{table}

\begin{table}[ht!]
 \centering
\adjustbox{max width=\textwidth}{
 \footnotesize
	\begin{tabular}{cccccccccccc}
		\toprule
		\multirow{2}{*}{Model}&  \multicolumn{3}{c}{Modified Blocks}& \multicolumn{3}{c}{Modified Bumps}& \multicolumn{3}{c}{Modified Heavisine}\\
		\cmidrule{2-10}

		& RSNR=3 & RSNR=5 & RSNR=10 & RSNR=3 &RSNR=5 & RSNR=10 & RSNR=3 &RSNR=5 & RSNR=10 \\
		\midrule

EBW & 1.525(0.228) & 0.644(0.1025) & 0.171(0.0245) & 1.487(0.2056) & 0.595(0.0847) & \textbf{0.171(0.0217)} & 1.106(0.1523) & 0.538(0.0775) & 0.153(0.0218) \\
  BPP-10 & 0.608(0.1362) & 0.256(0.055) & 0.133(0.0412) & 1.2(0.1951) & 0.623(0.2658) & 0.37(0.2954) & 0.602(0.123) & 0.237(0.0495) & 0.079(0.0183) \\
  BPP-21 & 0.869(0.1552) & 0.424(0.0683) & 0.283(0.0479) & 1.209(0.3056) & 0.937(0.3201) & 0.789(0.3442) & 0.679(0.1367) & 0.275(0.0562) & 0.134(0.0338) \\
  BASS-1 & 0.701(0.1489) & 0.296(0.0667) & 0.124(0.0455) & 0.927(0.147) & \textbf{0.442(0.0726)} & 0.197(0.063) & \textbf{0.561(0.126)} & 0.246(0.0423) & 0.083(0.0166) \\
  BASS-2 & 4.038(0.6519) & 2.232(0.371) & 1.368(0.168) & 9.881(1.1796) & 7.944(0.7727) & 6.999(0.5772) & 3.275(0.3734) & 2.276(0.3877) & 1.378(0.2871) \\
  LARMuK & 0.775(0.257) & 0.416(0.1317) & 0.213(0.0841) & 1.222(0.4575) & 0.805(0.3056) & 0.514(0.177) & 0.669(0.1749) & 0.323(0.0975) & 0.14(0.0456) \\
  LABS & \textbf{0.583(0.1718)} & \textbf{0.234(0.0696)} & \textbf{0.071(0.0325)} & \textbf{0.919(0.1397)} &0.46(0.0867) & 0.194(0.0443)  & 0.576(0.146) & \textbf{0.236(0.0575)} & \textbf{0.078(0.024)} \\
  \bottomrule
	\end{tabular}}
  \caption{Average of MSEs over 100 replications for three functions of Simulation 2 with $n = 512$. Estimated standard errors of MSEs are shown in parentheses}
  \label{tab:ex2-512}
\end{table}

\begin{figure}[ht!]
    \begin{subfigure}{\textwidth}
    \centering
    \includegraphics[width=0.95\linewidth]{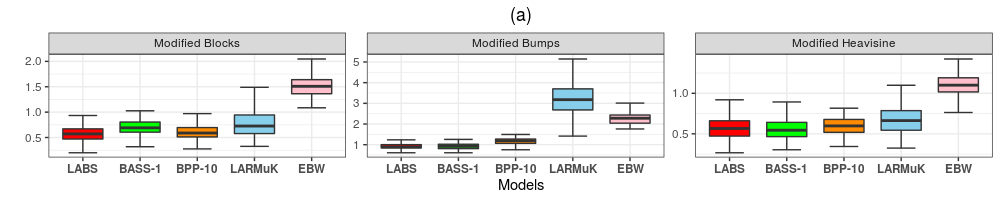}
    \label{fig:ex2-rsnr35}
    \end{subfigure}
    \begin{subfigure}{\textwidth}
    \centering
    \includegraphics[width=0.95\linewidth]{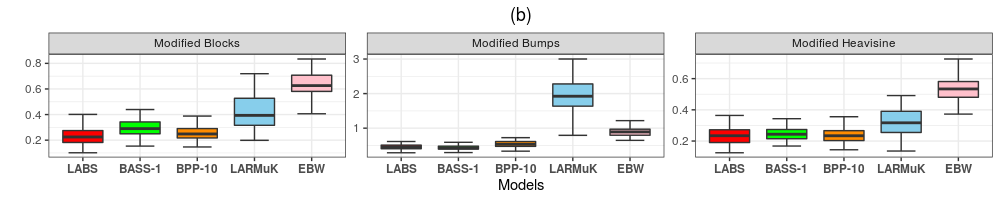}
    \label{fig:ex2-rsnr55}
    \end{subfigure}
    \begin{subfigure}{\textwidth}
    \centering
    \includegraphics[width=0.95\linewidth]{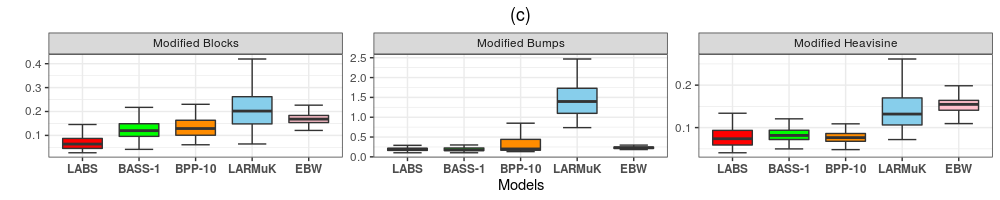}
    \label{fig:ex2-rsnr105}
    \end{subfigure}
\caption{Boxplots of MSEs from the second simulation with $n = 512$ and RSNR = (a) 3, (b) 5 and (c) 10}
\label{fig:ex2_bp512}
\end{figure}

\begin{figure}[ht!]
 \centering
 \includegraphics[width=\textwidth]{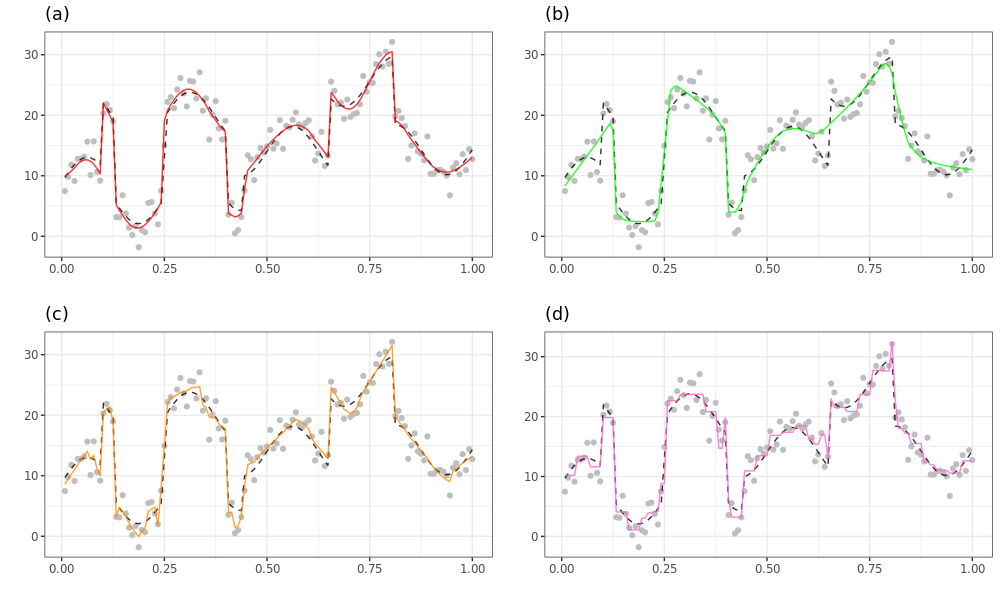}
 \caption{Comparisons of the estimates of a data set generated from the modified Blocks with $n = 128$ and RSNR = 3 using (a) LABS, (b) BASS-1, (c) BPP-10, and (d) EBW. Dashed lines represent true curves, solid lines represent estimates of curve.}
 \label{fig:ex2-comp}
 \end{figure}



\section{Real data applications}\label{sec:application}

We now apply LABS model \eqref{eq:mdsp} real-world datasets, including the minimum legal drinking age (MLDA) on mortality rate, the closing bitcoin price index, and the daily maximum value of concentrations of fine particulate matter (PM2.5) in Seoul. All the real data examples exhibit wildly varying patterns that may have jumps or peaks. These fluctuating patterns are expected to further illustrate the features of the LABS model.

In the following applications we set the hyperparameter values of the proposed model, LABS: $a_J = 5$ , $b_J = 1$, $r = 0.01$, and $R = 0.01$. In this analysis, we practically choose $S = \{0,1,2\}$ because the true curve of real data is unknown and it may have varying smoothness. 
We run it 200,000 times with a burn-in of 100,000 and thin by 10 to achieve convergence of the MCMC algorithm. Performance comparisons of our model and some rather good methods in the simulated studies are also conducted.


\subsection{Example 1: Minimum legal drinking age}\label{subsec:mlda}

The Minimum legal drinking age (MLDA) heavily affects youth alcohol consumption which has been a sensitive issue worldwide for policymakers.
In the past three decades, there have been many studies on the effect of legal access age to alcohol on death rates. The MLDA dataset collected from \cite{angrist2014mastering} contains death rates, a measure of the total number of deaths per 100,000 young Americans per year.

This data has been widely used to estimate the causal effect of policies on the minimum legal drinking age in the area of Regression Discontinuity Design (RDD). \autoref{fig:apps_rd} (a) highlights that the MLDA data might represent a piecewise smooth function with a single jump discontinuity at minimum drinking age of 21 referred to as cutoff in the RDD.  Specifically, each observation (or point) in \autoref{fig:apps_rd} corresponds to the death rate from all causes in the monthly interval and the number of all observations is 48.

 \begin{figure}[ht!]
 	\centering
 	\includegraphics[width=\textwidth]{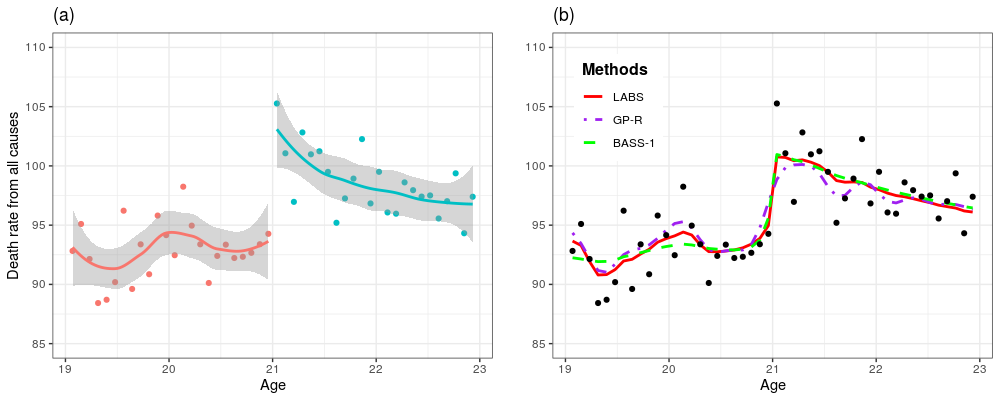}
 	\caption{(a) A piecewise curve fitting and (b) comparisons of the fitted posterior mean using BASS-1, GP-R and LABS for Minimum legal drinking age (MLDA) dataset}
 	\label{fig:apps_rd}
 \end{figure}

Using the MLDA data, we estimate unknown functions of death rates via LABS and several competing models including BPP-21, BASS-1, LARMuK, and GP-R.  \autoref{fig:apps_rd} (b) shows that three posterior mean estimates for an unknown function.  The solid curve denotes LABS, the dotted-dashed curve indicates GP-R, and the dashed curve represents BASS-1, respectively.

In \autoref{fig:apps_rd} (b), both LABS and BASS-1 provide similar posterior curves to the piecewise polynomial regression of \autoref{fig:apps_rd} (a). The estimated curves of them also have a jump point at 21.  While the estimated function of GP-R is smooth, the mean function for the LABS model has both smooth and jump parts.  We calculated the mean squared error with 10-folds cross-validation for comparison between methods.  The mean and standard deviation values of cross-validation prediction errors are given in \autoref{tab:apps_rd}. The smaller CV error rate of LABS implies that LABS has a better performance of estimating a smooth function with discontinuous points than the others.

\begin{table}[ht!]
\centering
\begin{tabular}{cccccc}
  \toprule
 & LABS &BASS-1 & BPP-21 & LARMuK & GP-R \\
  \hline
 Mean & \textbf{6.5851} & 6.7884 & 8.6643 & 8.35014 & 7.25693 \\
 Standard Deviation & 5.1838 & 4.98241 & 6.66641 & 6.45711 & 5.23563 \\
   \bottomrule
\end{tabular}
\caption{Mean and standard deviation for the error rate of 10-folds cross-validation on MLDA dataset.}
\label{tab:apps_rd}
\end{table}


\subsection{Example 2: Bitcoin prices on Bitstamp}\label{subsec:bitcoin}

Bitcoin is the best-known cryptocurrency based on Blockchain technology. The demands for bitcoin have increased globally because of offering a higher yield and easy access. The primary characteristic of bitcoin is to enable financial transactions from user to user on the peer-to-peer network configuration without a central bank. Unlimited trading methods and smaller market size than the stock market lead to  high volatility in the bitcoin price. We collected a daily bitcoin exchange rate (BTC vs. USD) on Bitstamp from January 1, 2017, to December 31, 2018. Bitcoin data (sourced from \url{http://www.bitcoincharts.com}) has 730 observations and 8 variables: date, open price (in USD), high price, low price, closing price, volume in bitcoin, volume in currency, and weighted bitcoin price.

\begin{figure}[ht!]
 \centering
 \includegraphics[width=\textwidth]{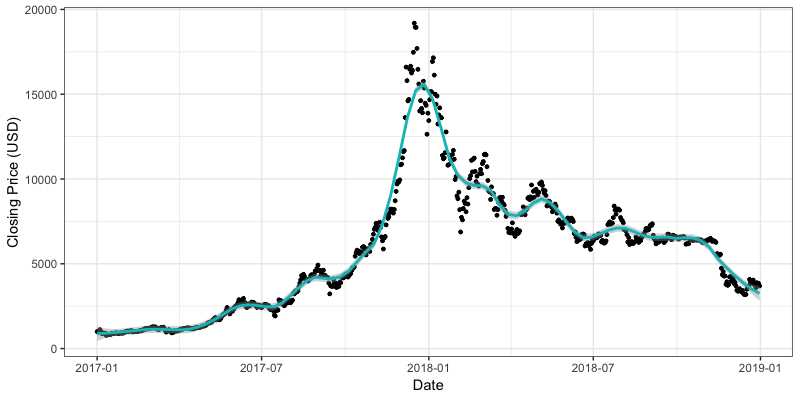}
 \caption{Daily bitcoin closing price with a smoothing line}
 \label{fig:apps_btc1}
\end{figure}

We also added LOESS (locally estimated scatterplot smoothing) regression line to a scatter plot of a daily closing price in \autoref{fig:apps_btc1}. The dataset shows locally strong upward and downward movements.
We apply LABS and other models to estimate the curve of daily bitcoin closing price. \autoref{fig:apps_btc2} illustrates the predicted curves of the LABS and competing models for approximating an unknown function of daily bitcoin closing price. There are no significant differences between the estimated posterior curves.

\begin{figure}[ht!]
 \centering
 \includegraphics[width=\textwidth]{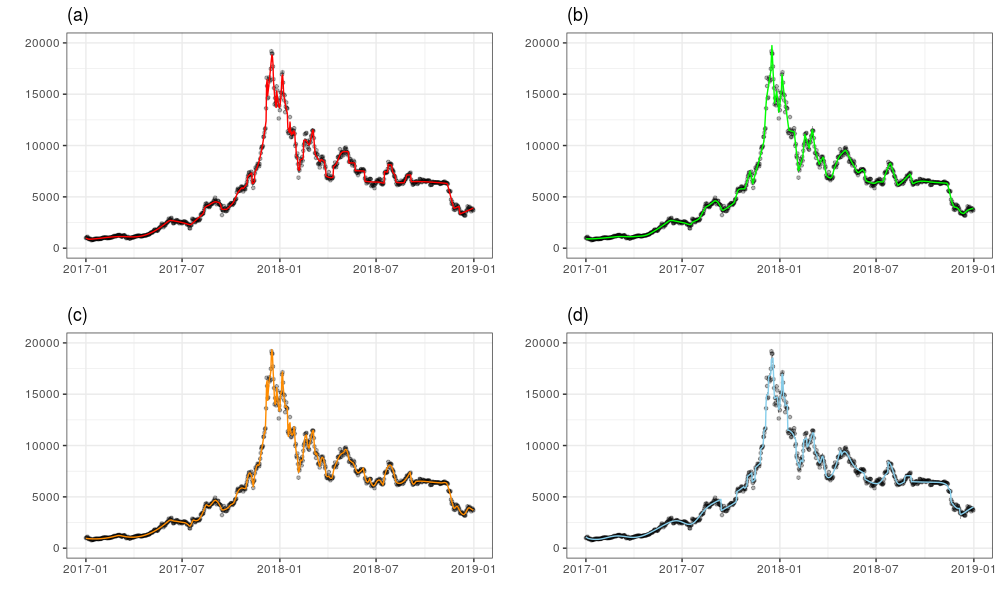}
 \caption{Posterior mean of $\eta$ on Bitcoin dataset using four models: (a) LABS, (b) BASS-1, (c) BPP-21 and (d) LARMuK}
 \label{fig:apps_btc2}
\end{figure}

Alternatively, we calculate cross-validation errors to assess model performances. The values of cross-validation errors are given in \autoref{tab:apps_btc}.  Both \autoref{tab:apps_btc} and \autoref{fig:apps_btc3} demonstrate that the LABS model provides more accurate function estimation and consistent performance through  both minimum mean and relatively low standard deviation values of the cross-validation errors.  They also indicate that the
Gaussian process is not proper in the cases with locally varying smoothness. We found that the LABS gives more reliable estimated functions that may have both discontinuous and smooth parts than other methods.

\begin{table}[ht!]
\centering
\begin{tabular}{cccccc}
  \toprule
 & LABS & BASS-1 & BPP-21 & LARMuK  & GP-R\\
  \hline
  Mean & \textbf{98014} & 109222 & 99937 & 149272 & 583046 \\
  Standard Deviation & 30057 & 30052.5& 29210 & 51128.7 & 137859.7 \\

   \bottomrule
\end{tabular}
\caption{Mean and standard deviation for the error rate of 10-folds cross-validation on Bitcoin dataset}
\label{tab:apps_btc}
\end{table}

\begin{figure}[ht!]
 \centering
 \includegraphics[width=0.5\linewidth]{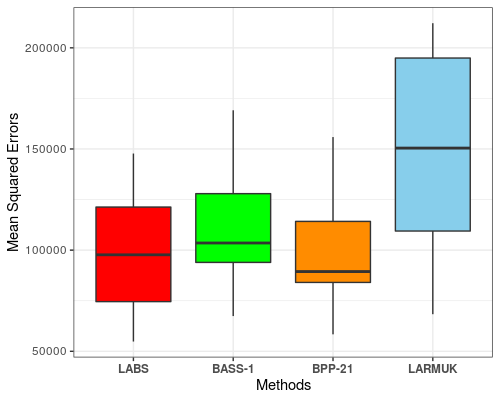}
 \caption{Boxplot of the cross-validation test error rate for the Bitcoin data.}
 \label{fig:apps_btc3}
\end{figure}


\subsection{Example 3: Fine particulate matter in Seoul}\label{subsec:pm}

The fine dust has become a national issue and its forecast received great attention from the media. A lot of research on fine particulate matter (PM2.5) have been carried out as it gained social attention.  According to the studies, Korea’s fine dust particles originated from within the country and external sources from China. Many factors cause PM2.5 concentration to rapidly rise or fall and make it difficult to accurately predict the behavior of it.

We estimate the unknown function of daily maximum concentrations of PM2.5 in Seoul. The PM2.5 dataset collected from the AIRKOREA (\url{https://www.airkorea.or.kr}) includes 1261 daily maximum values of PM2.5 concentration from January 1, 2015, to June 30, 2018. We removed all observations that have missing values.

\begin{figure}[ht!]
 \centering
 \includegraphics[width=\textwidth]{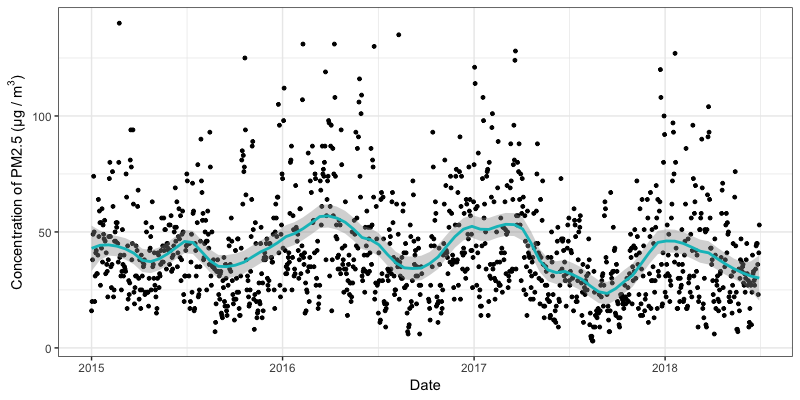}
 \caption{Daily maximum concentrations of PM2.5 in Seoul with a smoothing line}
 \label{fig:apps_pm25_1}
\end{figure}

\begin{figure}[ht!]
 \centering
 \includegraphics[width=\textwidth]{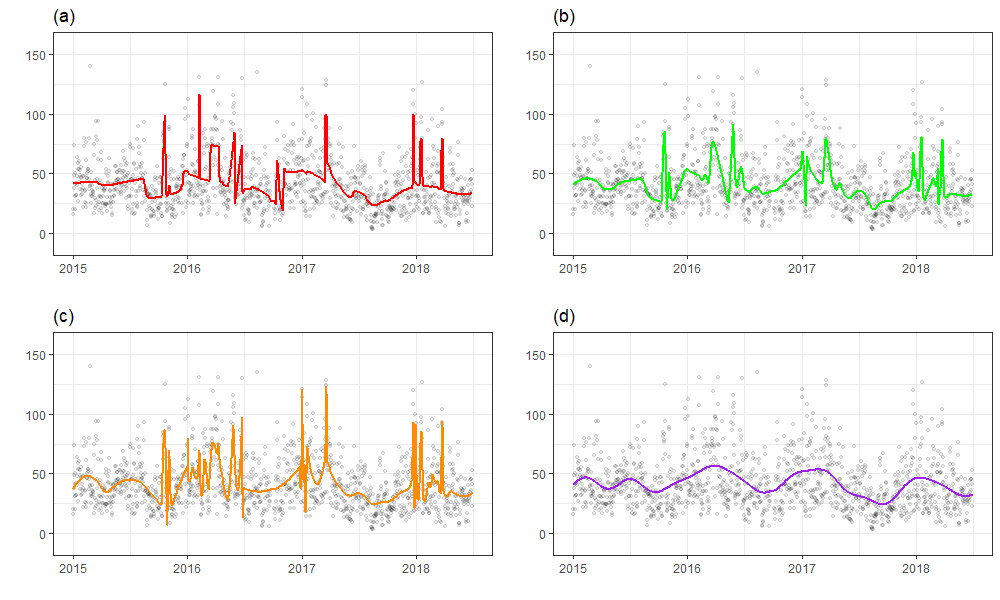}
 \caption{Posterior mean of the mean function on PM2.5 dataset using four models: (a) LABS, (b) BASS-1, (c) BPP-10 and (d) GP-R}
 \label{fig:apps_pm25_2}
\end{figure}

\autoref{fig:apps_pm25_1} displays daily fluctuations and seasonality. PM2.5 concentrations are higher in winter and spring than in summer and fall.
A LOESS smoothed line added in the figure does not reflect these features well. We take advantage of combinations of basis functions, $S = \{0, 1, 2\}$ to grasp such characteristics of PM2.5 data with multiple jumps and peak points. As shown in \autoref{fig:apps_pm25_2}, all four methods represent different estimated lines of the unknown mean function and pick features of the data up in their way. Interestingly, LABS, BASS-1, and BPP-10  react in different ways while they detect peaks, jumps, and smooth parts of PM2.5 data.

We also compute the average and standard deviation of the cross-validated errors of  LABS, BPP-10, BASS-1, LARMuK, and GP-R, which are  given in \autoref{tab:apps_pm}. The LABS model has the lowest cross-validation error among all methods. Moreover a comparably low standard deviation of LABS supports that it has a more stable performance for estimating any shape of functions due to using all three types of B-spline basis.

\begin{table}[ht!]
\centering
\begin{tabular}{cccccc}
  \toprule
 & LABS &BASS-1 & BPP-10 & LARMuK  & GP-R \\
  \hline
  Mean & \textbf{384.8863} & 393.6049 & 398.17 & 399.6718 & 436.2286 \\
  Standard Deviation & 56.88069 & 60.38016 & 58.63784 & 53.02499 & 67.98722  \\
   \bottomrule
\end{tabular}
\caption{Mean and standard deviation for the error rate of 10-folds cross-validation on Seoul PM2.5 dataset}
\label{tab:apps_pm}
\end{table}


\section{Conclusions}\label{sec:conclusion}

We suggested general function estimation methodologies using the B-spline basis function as the elements of an overcomplete system. The B-spline basis can systematically represent functions with varying smoothness since it has nice properties such as local support
and differentiability. The overcomplete system and a L\'{e}vy random measure enable a function that has both continuous and discontinuous parts to capture all features of the unknown regression function. Simulation studies and real data analysis also present that the proposed models show better performance than other competing models. We also showed that the prior has full support in certain Besov spaces. The prominent limitation of the LABS model is the slow mixing of the MCMC algorithm. Future work will develop an efficient algorithm for the LABS model and extend the LABS model for multivariate analysis.

\bibliographystyle{dcu}
\bibliography{larkbspline.bib}

\newpage
\appendix
\section{Proof of Theorems 1-3} \label{sec:appendA}
\subsection{Proof of Theorem \ref{thm:1}}
For simplicity, we assume that $\calX = [0,1]$. Since the B-spline basis has local support and is bounded, $\|B_k(x;\boldsymbol{\xi}_k)\|_p$ is finite for all $k \geq 0$. It is enough to show that if the Besov semi-norm, $|B_k(x; \boldsymbol{\xi}_k)|_{\mathbb{B}^\alpha_{p,q}}$ is finite for all $k \geq 0$.The definition of the modulus of smoothness and the property that $\omega_k(f, t)_p \leq 2^r \cdot \omega_{k-r}(f, t)_p,\, 0 \leq r \leq k, $ if $f \in L_p(\mathcal{X})$ imply that 
$$
\omega_r(B_k(x; \boldsymbol{\xi}_k), t)_p \leq 2^{r-1} \cdot \omega_1(B_k(x; \boldsymbol{\xi}_k), t)_p. 
$$
Let $k$ be zero. Then, the B-spline basis is piecewise constant with 2 knots, $\bxi_0 := (\bxi_{01}, \bxi_{02})$. 
By the definition of the B-spline basis \eqref{eq:bsp}, we divide into two cases to calculate the modulus of continuity. 
\begin{case}
Assume $\bxi_{01} + h < \bxi_{02}, \,\,h > 0$. Thus,
$$
\|B_0(x+h; \boldsymbol{\xi}_0) - B_0(x; \boldsymbol{\xi}_0)\|_p \leq 2 \cdot h^{\frac{1}{p}}. 
$$
\end{case}

\begin{case}
Assume $\bxi_{01} + h > \bxi_{02}, \,\,h > 0$.  Thus,
$$
\|B_0(x+h; \boldsymbol{\xi}_0) - B_0(x; \boldsymbol{\xi}_0)\|_p  \leq  2 \cdot h^{\frac{1}{p}}. 
$$
\end{case}
Therefore, in all cases, 
\begin{equation}
\omega_r(B_0(x; \boldsymbol{\xi}_0), t)_p \leq 2^{r} \cdot h^{\frac{1}{p}}.
\label{eq:lem}
\end{equation}
By definition, $|B_0(x; \boldsymbol{\xi}_0)|_{\mathbb{B}^\alpha_{p,q}} = \left(\int_0^\infty (t^{-s} \omega_r(B_0(x; \boldsymbol{\xi}_0), t)_p)^p \frac{dt}{t}\right)^{1/q}$, so
\begin{align*}
|B_0(x; \boldsymbol{\xi}_0)|_{\mathbb{B}^\alpha_{p,q}} & \leq \left[ \int_0^1 t^{-sq-1} \cdot 2^{rq} \cdot t^{\frac{1}{p}} \, dt + \int_{1}^{\infty} t^{-sq-1} \cdot 2^{rq} \, dt \right]^{1/q} \\
& = 2^r \cdot \left[ \int_0^1 t^{-q(s - \frac{1}{p}) - 1}  \, dt + \int_{1}^{\infty} t^{-sq-1}  \, dt \right]^{1/q}.
\end{align*}
The upper bound of $|B_0(x; \boldsymbol{\xi}_0)|_{\mathbb{B}^\alpha_{p,q}}$ is finite if and only if $\alpha < \frac{1}{p}$ and  $q < \infty$. 

Let $k \geq 1$. Since the B-spline basis of order $k$ is a piecewise polynomial and has $(k - 1)$ continuous derivatives at the knots, it falls in $W_p^k(\calX)$, where $W_p^k(\calX)$ is the Sobolev space, which is a vector space of functions that have weak derivatives. 
See the definition of the Sobolev space described in chapter 2.5 of \cite{devore1993constructive}. We use the following property of the modulus of smoothness,
$$
\omega_{r+k}(f, t)_p \leq t^{r} \cdot \omega_{k}(f^{(r)}, t)_p, \, t > 0, 
$$
where $f^{(r)}$ is the weak $r$th derivative of $f$. For $k \geq 1$, the Besov semi-norm of $B_k(x; \boldsymbol{\xi}_k)$ is bounded by
\begin{equation}
\begin{aligned}
|B_k(x; \boldsymbol{\xi}_k) |_{\mathbb{B}^\alpha_{p,q}} &= \left(\int_0^\infty (t^{-\alpha} \omega_r(B_k(x; \boldsymbol{\xi}_k), t)_p)^q \frac{dt}{t}\right)^{1/q}\\
&\leq \left(\int_0^1 (t^{-\alpha} \cdot (t^{k} \cdot \omega_{r-k}(B_k^{(k)}(x; \boldsymbol{\xi}_k), t)_p)^q
\frac{dt}{t} + \int_1^\infty 2^{rq} \cdot t^{-sq-1} \, dt \right)^{1/q}\\
& \leq \left(\int_0^1 (t^{-\alpha q - 1} \cdot (t^{k} \cdot 2^{r-k-1} \cdot \omega_{1}(B_k^{(k)}(x; \boldsymbol{\xi}_k), t)_p)^q
\, dt + 2^{rq} \cdot \int_1^\infty  t^{-\alpha q-1} \, dt\right)^{1/q}
\end{aligned}
\label{eq:lem1}
\end{equation}
Since $B_k^{(k)}(x; \boldsymbol{\xi}_k)$ is a piecewise constant function, \eqref{eq:lem} implies that
\begin{equation}
 \omega_{1}(B_k^{(k)}(x; \boldsymbol{\xi}_k), t)_p \leq C \cdot h^{\frac{1}{p}},\quad \text{for some constant} \,\, C > 0.
 \label{eq:lem2}
\end{equation}
Using \eqref{eq:lem1} and \eqref{eq:lem2}, it follows that
\begin{align*}
|B_k(x; \boldsymbol{\xi}_k) |_{\mathbb{B}^\alpha_{p,q}} & \leq    \left(C' \cdot  \int_0^1 t^{-\alpha q + kq + \frac{q}{p}-1} \, dt + 2^{rq} \cdot \int_1^\infty t^{-\alpha q-1}  \, dt  \right)^{1/q}  ,\quad \text{for some constant} \,\, C' > 0.
\end{align*}
For all $k \geq 1$, $|B_k(x; \boldsymbol{\xi}_k)|_{\mathbb{B}^\alpha_{p,q}}$ is finite if and only if $\alpha < k + \frac{1}{p}$ and  $q < \infty$, so the proof is complete. 

\subsection{Proof of Theorem \ref{thm:2}}

By Theorem 3 of \cite{wolpert2011stochastic}, the $L_p$ norm and Besov semi-norm of $\eta$ satisfy the following upper bounds, respectively.
\begin{equation*}
\begin{aligned}
\| \eta \|_p  & \leq  \sum_{k \in S} \sum_{l}^{} \|B_{k} (x; \boldsymbol{\xi}_{k,l})  \|_{p} |\beta_{k,l}|,  \\
 | \eta |_{\mathbb{B}^\alpha_{p,q}} & \leq \sum_{k \in S} \sum_{l}^{} | \beta_{k,l} | \cdot |B_k|_{\mathbb{B}^\alpha_{p,q}},
\end{aligned}    
\end{equation*}
 Since the condition for \eqref{eq:PMcond1} is satisfied and  B-spline basis is bounded and locally supported,  $\|\eta\|_p$ is  almost surely finite.
To obtain finite Besov semi-norms for all $k \in S$, the smoothness parameter $\alpha$  should be $\alpha < \min(S) + \frac{1}{p}$ by  \autoref{thm:bspinbsv}. Therefore, $\eta$ belongs to $\mathbb{B}^\alpha_{p,q}$ with  $\alpha < \min(S) + \frac{1}{p}$ almost surely.

\subsection{Proof of Theorem \ref{thm:3}}
For the sake of simplicity we assume $\calX = [0, 1]$. Fix $\delta > 0$ and $\eta_0 \in \mathbb{B}^{\alpha}_{p,q}([0,1])$ with $\alpha > 0, 1 \leq p, q < \infty$. If $1 \leq p' \leq p < \infty$, then $\eta_0$ also belongs to $\mathbb{B}^{\alpha}_{p',q}([0,1])$ by property of the Besov space \citep{cohen2003numerical}[3.2, page 163]. From Theorem 2.1 of \cite{petrushev1988direct}, we can show that there exists a spline $s \in S(n^{\star},q)$ such that
$$
\|\eta_0 - s  \|_{p} < C \frac{\| \eta_0 \|_{\mathbb{B}^{\alpha}_{p',q}}}{(n^{\star})^{\alpha}} < \frac{\delta}{2},
$$
where $S(n^{\star},q)$ denotes the set of all splines of degree $(q - 1)$ with a sufficiently large number $n^{\star}$ knots and constant $C = C(\alpha, p, q)$. Since any spline of given degree can be represented as a linear combination of B-spline basis functions with same degree,  we can  define a spline $s(x)$ by
\begin{equation}
s(x) = \sum_{j = 1}^{n^{\star}} \beta^*_j B^{*}_{(q-1),j}(x),  
\label{eq:bspreg}
\end{equation}
where $B^{*}_{(q-1),j}(x)$ is the B-spline basis of degree $(q-1)$ with a sequence of knots $\boldsymbol{\xi}^*$ in \eqref{eq:bsp}.

Set $n^\star := \sum_{k \in S} J_k^{\delta}$, $A := \sum_{k \in S} \sum_{l = 1}^{J_k^{\delta}} |\beta_{k,l}| < \infty$ , $\rho := \sup \| B_k (x, \boldsymbol{\xi}_{k})\|_{p} < \infty$ and  $\epsilon := \frac{\delta}{2(A + \rho)}$. We denote the range of a sequence of knots $\boldsymbol{\xi}_{k,l}$ by $r(\boldsymbol{\xi}_{k,l})$, e.g.,  $r(\boldsymbol{\xi}_{k,l}) = (\xi_{k,l,(k+2)}- \xi_{k,l,1})$.  For convenience, we reindex the coefficients and knots of the spline $s(x)$ in  \eqref{eq:bspreg} such that $\beta^*_{k,l}$ and $\boldsymbol{\xi}_{k,l}^*$ for $l = 1, \ldots, J_k^{\delta}, k \in S$. Then, the spline $s(x)$ can be expressed as follows: 
$$
s(x) = \sum_{k \in {S}} \sum_{l = 1}^{J_k^{\delta}} \beta^*_{k,l} B^{*}_{(q-1),l}(x; \boldsymbol{\xi}_{k,l}^*),
$$
where $\boldsymbol{\xi}_{k,l}^* := (\xi^*_{l},\ldots, \xi^*_{l+(q-1)+1})$ is a subsequence of given knots $\boldsymbol{\xi}^{*}$. Using the definitions of B-spline basis in \eqref{eq:bsp} and \eqref{eq:bsp2}, we can find a $\zeta > 0$ such that
$$
\max(r(\boldsymbol{\xi}_{k,l}),r(\boldsymbol{\xi}^*_{k,l})) < \zeta \Rightarrow \|B_k(x; \boldsymbol{\xi}_{k,l}) - B^{*}_{(q-1),l}(x; \boldsymbol{\xi}^*_{k,l}) \|_{p} < \epsilon, \quad \forall l,\, \forall k.
$$
 Let's define the set
$$
\bar{b'}(\eta_0) := \left\{ \eta: \eta(x) = \sum_{k \in S} \sum_{l = 1}^{J_k^{\delta}}\beta_{k,l} B_{k}(x; \boldsymbol{\xi}_{k,l}),\,\, \sum_{k \in S} \sum_{l = 1}^{J_k^{\delta}}|\beta_{k,l} - \beta^*_{k,l}| < \epsilon,\,\, 
\max(r(\boldsymbol{\xi}_{k,l}),r(\boldsymbol{\xi}^*_{k,l})) < \zeta, \, \forall l,\, \forall k\right\}.
$$

\newpage
\begin{lemma}
$$
\bar{b'}_\delta(\eta_0) \subset \bar{b}_\delta(\eta_0)
$$
\label{lem:1}
\end{lemma}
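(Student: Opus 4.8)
The plan is to establish the inclusion by a single triangle-inequality argument routed through the fixed spline $s$ constructed just before the lemma. Take an arbitrary $\eta \in \bar{b'}_\delta(\eta_0)$, so that $\eta(x) = \sum_{k \in S}\sum_{l=1}^{J_k^\delta}\beta_{k,l}B_k(x;\boldsymbol{\xi}_{k,l})$ with $\sum_{k,l}|\beta_{k,l}-\beta^*_{k,l}| < \epsilon$ and $\max(r(\boldsymbol{\xi}_{k,l}),r(\boldsymbol{\xi}^*_{k,l})) < \zeta$ for every $k,l$; the goal is to show $\|\eta - \eta_0\|_p < \delta$. I would insert $s$ and write
$$
\|\eta - \eta_0\|_p \leq \|\eta - s\|_p + \|s - \eta_0\|_p .
$$
The second term is already bounded by $\delta/2$ through the Petrushev approximation estimate recorded above, so everything reduces to controlling $\|\eta - s\|_p$ by $\delta/2$.

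For that first term I would match the two expansions index by index and split each summand into a coefficient part and a basis part via the elementary identity
$$
\beta_{k,l}B_k(x;\boldsymbol{\xi}_{k,l}) - \beta^*_{k,l}B^{*}_{(q-1),l}(x;\boldsymbol{\xi}^*_{k,l}) = (\beta_{k,l}-\beta^*_{k,l})B_k(x;\boldsymbol{\xi}_{k,l}) + \beta^*_{k,l}\left(B_k(x;\boldsymbol{\xi}_{k,l}) - B^{*}_{(q-1),l}(x;\boldsymbol{\xi}^*_{k,l})\right).
$$
By Minkowski's inequality the coefficient part sums to at most $\rho\sum_{k,l}|\beta_{k,l}-\beta^*_{k,l}| < \rho\,\epsilon$ (using $\|B_k\|_p \leq \rho$ and the coefficient constraint defining $\bar{b'}_\delta$), while the basis part sums to at most $\epsilon\sum_{k,l}|\beta^*_{k,l}| = A\,\epsilon$ (using the knot-range implication $\max(r(\boldsymbol{\xi}_{k,l}),r(\boldsymbol{\xi}^*_{k,l})) < \zeta \Rightarrow \|B_k(\cdot;\boldsymbol{\xi}_{k,l}) - B^{*}_{(q-1),l}(\cdot;\boldsymbol{\xi}^*_{k,l})\|_p < \epsilon$). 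Adding the two pieces gives $\|\eta - s\|_p < (A+\rho)\,\epsilon = \delta/2$ by the choice $\epsilon = \frac{\delta}{2(A+\rho)}$, and combining with the Petrushev bound yields $\|\eta - \eta_0\|_p < \delta$, i.e. $\eta \in \bar{b}_\delta(\eta_0)$.

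The step I expect to carry the real weight is the basis-difference estimate $\|B_k(\cdot;\boldsymbol{\xi}_{k,l}) - B^{*}_{(q-1),l}(\cdot;\boldsymbol{\xi}^*_{k,l})\|_p < \epsilon$. Because the two basis functions need not share a degree ($k$ versus $q-1$), this is not a continuity-in-the-knots statement for a single family; rather it rests on the fact that a normalized B-spline is bounded by $1$ and supported on an interval whose length equals its knot range, so that $\|B_k(\cdot;\boldsymbol{\xi}_{k,l})\|_p \leq r(\boldsymbol{\xi}_{k,l})^{1/p}$ and likewise for $B^{*}$. Choosing $\zeta$ of order $\epsilon^p$ then makes both $L_p$ norms, and hence their difference, smaller than $\epsilon$; this is precisely the role played by the constant $\zeta$ fixed before the lemma. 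I would also read the constant $A$ as the total absolute mass $\sum_{k,l}|\beta^*_{k,l}|$ of the \emph{fixed} spline coefficients, since it is this quantity (independent of the varying $\eta$) that the clean bound $A\,\epsilon$ requires and that keeps the definitions of $\epsilon$, $\zeta$, and $\bar{b'}_\delta$ non-circular.
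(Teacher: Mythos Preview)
Your argument is correct and follows the same route as the paper's proof: triangle inequality through the spline $s$, an add-and-subtract on each summand, and the choice $\epsilon = \delta/(2(A+\rho))$. The only difference is that you factor the identity as $(\beta_{k,l}-\beta^*_{k,l})B_k + \beta^*_{k,l}(B_k - B^*)$ whereas the paper factors out $\beta_{k,l}$ in the basis-difference term; your version is in fact cleaner, since (as you correctly point out) it makes $A = \sum_{k,l}|\beta^*_{k,l}|$ a fixed constant of the target spline rather than a quantity depending on the varying $\eta$, which is needed for $\epsilon$ and $\bar{b'}_\delta(\eta_0)$ to be well defined before you pick $\eta$.
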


\begin{proof}
It suffices to show that $\eta \in \bar{b'}_\delta(\eta_0) \Longrightarrow \eta \in \bar{b}_\delta(\eta_0)$ to finish the proof of the lemma. For any $\eta \in \bar{b'}_\delta(\eta_0)$,
\begin{equation*}
\begin{aligned}
 \|\eta - s\|_{p} & \leq  \sum_{k \in S} \sum_{l = 1}^{J_k^{\delta}} \| \beta_{k,l} B_k(x; \boldsymbol{\xi}_{k,l}) - \beta_{k,l}^* B^{*}_{(q-1),l}(x; \boldsymbol{\xi}^*_{k,l}) \|_{p} \\
& \leq \sum_{k \in S} \sum_{l = 1}^{J_k^{\delta}} |\beta_{k,l}| \cdot \| B_k(x; \boldsymbol{\xi}_{k,l}) -  B^{*}_{(q-1),l}(x; \boldsymbol{\xi}^*_{k,l}) \|_{p} + \sum_{k \in S} \sum_{l = 1}^{J_k^{\delta}} | \beta_{k,l} - \beta_{k,l}^* | \cdot \|   B_k(x; \boldsymbol{\xi}_{k,l}) \|_{p} \\
& \leq  \epsilon \cdot \sum_{k \in S} \sum_{l = 1}^{J_k^{\delta}} |\beta_{k,l}|  + \rho \cdot \sum_{k \in S} \sum_{l = 1}^{J_k^\delta} | \beta_{k,l} - \beta_{k,l}^* |  \\
& \leq \epsilon \cdot A + 2 \rho \cdot \epsilon = (A + \rho) \cdot \epsilon = \frac{\delta}{2}.  \nonumber
\end{aligned}    
\end{equation*}
By the triangle inequality,
\begin{equation*}
\begin{aligned}
\|\eta - \eta_0  \|_{p} &\leq \| \eta - s\|_{p} + \| s - \eta_0 \|_{p}\\
& < \frac{\delta}{2} + \frac{\delta}{2} = \delta. \nonumber
\end{aligned}    
\end{equation*}
Thus, $\eta \in \bar{b}_\delta(\eta_0)$ and this finishes the proof of the lemma.
\end{proof}

\newpage
To complete the proof of this theorem, we have to show that $\Pi\left(\eta \in \bar{b'}_\delta(\eta_0)\right) > 0$ by using the previous lemma. Let $J^\star := \max_{k \in S} J_k^\delta$. By the triangle inequality,

\begin{equation*}
\begin{aligned}
\Pi\left(\eta \in \bar{b'}_\delta(\eta_0)\right) &= \Pi \left(\sum_{k \in S }\int \int_{\mathbf{R} \times \mathcal{X}^{(k+2)}} \beta_k B_k(x; \boldsymbol{\xi}_k) N_k(d \beta_k, d \boldsymbol{\xi}_k) \in  \bar{b'}_\delta(\eta_0) \right) \\
& = \Pi\left(\sum_{k \in S} \sum_{l = 1}^{J_k}B_{k}(x; \boldsymbol{\xi}_{k,l})\beta_{k,l} \in  \bar{b'}_\delta(\eta_0) \right) \\
& = \mathbb{P}\left[\sum_{k \in S}\sum_{l = 1}^{J_k^\delta} | \beta_{k,l} - \beta_{k,l}^{*}| < \epsilon,  \,\, 
\max(r(\boldsymbol{\xi}_{k,l}),r(\boldsymbol{\xi}^*_{k,l})) < \zeta,\,\,  J_k = J^\delta_k,\,\, \forall k \in S\right] \\
& > \prod_{k \in S} \left\{\mathbb{P} \left[  | \beta_{k,l} - \beta_{k,l}^{*}| < \frac{\epsilon}{|S|J^{\star}}, \max(r(\boldsymbol{\xi}_{k,l}),r(\boldsymbol{\xi}^*_{k,l})) < \zeta, \,\,l = 1,2,\ldots, J^\delta_k \right] \right\}\\
& \quad \times  \prod_{k \in S}
\left[ \dfrac{\nu_k(\mathbb{R} \times \mathcal{X}^{(k+2)})^{J_k^{\delta}}\cdot  \exp(-\nu_k(\mathbb{R} \times \mathcal{X}^{(k+2)}))}{J_k^{\delta}!} \right] \\
& = \prod_{k \in S} \left\{ \prod_{j = 1}^{J_k^\delta }\left[ \dfrac{\nu_k(|\beta_{k,l} - \beta_{k,l}^{*}| < \frac{\epsilon}{|S|J^{\star}}, \max(r(\boldsymbol{\xi}_{k,l}),r(\boldsymbol{\xi}^*_{k,l})) < \zeta)}{\nu_k(\mathbb{R} \times \mathcal{X}^{(k+2)})} \right] \right\} \\
& \quad \times \prod_{k \in S}  \left[ \dfrac{\nu_k(\mathbb{R} \times \mathcal{X}^{(k+2)})^{J_k^{\delta}}\cdot  \exp(-\nu_k(\mathbb{R} \times \mathcal{X}^{(k+2)}))}{J_k^{\delta}!} \right]  \\
& = \prod_{k \in S} \left\{ \prod_{j = 1}^{J_k^\delta }\left[ \int_{|\beta_{k,l} - \beta_{k,l}^{*}| < \frac{\epsilon}{|S|J^\star}} \pi(\beta_k) d \beta_k \int_{\max(r(\boldsymbol{\xi}_{k,l}),r(\boldsymbol{\xi}^*_{k,l})) < \zeta} \pi(\boldsymbol{\xi}_k) d \boldsymbol{\xi}_k\right] \times \left[ \dfrac{M_k^{J_k^{\delta}}\cdot  \exp(-M_k)}{J_k^{\delta}!} \right] \right\}. 
    \end{aligned}
\end{equation*}
Since we assume a finite Levy measure and $\pi(\beta_k) = \mathcal{N}(\beta_k; 0, \phi^2_{k})$, $\pi(\boldsymbol{\xi}_k) = \mathcal{U}(\mathcal{X}^{(k+2)})$ in the LABS model,
$$
\Pi\left(\eta \in \bar{b'}_\delta(\eta_0)\right) > 0.
$$
Hence, by applying the lemma, we get $\Pi\left(\eta \in \bar{b_\delta}(\eta_0)\right) \geq \Pi\left(\eta \in \bar{b'_\delta}(\eta_0)\right) > 0$  and the theorem is proved.

\newpage

\section{Full simulation results for Simulation 1} \label{sec:appendB}

This appendix contains the full simulations results of the four DJ test functions. We simulated two scenarios: (a) small sample size ($n = 128$) and (b) large sample size ($n = 512$) with different noise levels  (RSNR = 3, 5, and 10).

\begin{table}[ht!]
\centering
\adjustbox{max width=0.85\textwidth}{%

 \scriptsize
	\begin{tabular}{ccccccccc}
		\toprule
		\multirow{2}{*}{Model}&  \multicolumn{3}{c}{Bumps}& \multicolumn{3}{c}{Blocks}\\
		\cmidrule{2-7}

		& RSNR=3 & RSNR=5 & RSNR=10 & RSNR=3 &RSNR=5 & RSNR=10 \\
		\midrule

BSP-2 & 26.904(0.4461) & 25.495(0.1606) & 24.9(0.0401) & 5.96(0.4429) & 4.53(0.1594) & 3.927(0.0399) \\
  LOESS & 47.266(0.1618) & 47.163(0.0812) & 47.119(0.0366) & 17.924(0.7218) & 17.503(0.3846) & 17.332(0.2312) \\
  SS &  43.552(4.6764) & 43.377(4.7159) & 43.984(4.6074) & 4.895(0.5145) & 3.396(0.241) & 2.699(0.1107) \\
 NWK & 39.892(1.8831) & 39.365(1.3862) & 39.033(0.7393) & 4.285(0.7336) & 1.936(0.36) & 0.472(0.0567) \\
  EBW & 4.986(1.1761) & 1.936(0.601) & 0.447(0.0981) & 3.243(1.0747) & 0.859(0.2237) & 0.21(0.0484) \\
  GSP-L & 22.99(4.8847) & 22.03(5.3556) & 20.112(4.4213) & 7.409(1.3261) & 6.637(0.9807) & 6.546(1.1115) \\
  GSP-R & 41.626(2.9041) & 40.819(2.7234) & 40.955(3.102) & 15.654(2.0255) & 15.323(1.8902) & 15.44(2.3391) \\
  BPP-10 & 4.571(2.3604) & 3.668(2.7142) & 3.304(2.9789) & 2.156(0.789) & 0.908(0.2537) & 0.465(0.2403) \\
  BPP-21 & 15.115(5.9376) & 14.674(6.3396) & 14.363(6.7395) & 3.918(0.589) & 2.682(0.5399) & 2.311(0.451) \\
  BASS-1 & 2.968(0.4322) & 1.206(0.4907) & 0.252(0.0421) & 2.498(0.6331) & 0.696(0.2226) & 0.122(0.0381) \\
  BASS-2 & 47.977(7.4411) & 45.988(9.3435) & 45.021(9.2185) & 7.533(1.7616) & 4.253(0.8586) & 2.852(0.3863) \\
  LARMuK & 2.852(0.426) & 1.182(0.678) & 0.319(0.0663) & 1.799(0.5873) & 0.682(0.2436) & 0.193(0.08) \\
  LABS & \textbf{2.589(0.5908)} & \textbf{0.837(0.3124)} & \textbf{0.246(0.0683)} & \textbf{1.305(0.5272)} & \textbf{0.365(0.1645)} & \textbf{0.072(0.0293)} \\
		\bottomrule
	\end{tabular}}
	\caption{ Average mean squared errors with estimated standard errors in parentheses from 100 replications for Bumps and Blocks examples with $n = 128$}
    \label{tab:ex1-bb-128}
\end{table}

\begin{table}[ht!]
 \centering
\adjustbox{max width=0.85\textwidth}{%

 \footnotesize
	\begin{tabular}{ccccccccc}
		\toprule
		\multirow{2}{*}{Model}&  \multicolumn{3}{c}{Doppler}& \multicolumn{3}{c}{Heavisine}\\
		\cmidrule{2-7}

		& RSNR=3 & RSNR=5 & RSNR=10 & RSNR=3 &RSNR=5 & RSNR=10 \\
		\midrule

BSP-2 & 3.896(0.4928) & 2.447(0.1774) & 1.836(0.0444) & 2.399(0.4208) & 0.926(0.1515) & 0.305(0.0379) \\
  LOESS & 8.891(1.506) & 6.533(1.0853) & 5.344(0.6362) & 0.895(0.2248) & 0.548(0.0976) & 0.35(0.0436) \\
  SS &  3.644(0.587) & 2.025(0.24) & 1.251(0.0812) & 0.875(0.2725) & 0.484(0.1021) & 0.235(0.0299) \\
  NWK & 4.045(1.102) & 1.864(0.2683) & 0.477(0.0624) & 1.022(0.352) & 0.521(0.1321) & 0.228(0.0472) \\
  EBW & 2.979(0.6397) & 1.319(0.3142) & 0.341(0.0855) & 1.29(0.3473) & 0.586(0.1365) & 0.185(0.0456) \\
  GSP-L & 5.845(1.3505) & 5.313(1.3217) & 4.843(0.9023) & 2.601(2.0255) & 2.217(2.1273) & 1.743(1.7561) \\
  GSP-R & 12.164(2.9995) & 11.588(3.1093) & 12.402(3.5243) & 1.02(0.2869) & 0.756(0.1707) & 0.646(0.1152) \\
  BPP-10 & 2.911(0.6396) & 1.275(0.2471) & 0.559(0.2008) & 1.503(0.4239) & 0.674(0.1576) & 0.217(0.0616) \\
  BPP-21 & 2.575(0.5217) & 1.463(0.3399) & 1.166(0.3932) & 0.941(0.2702) & 0.444(0.105) & 0.147(0.032) \\
  BASS-1 & 2.865(0.6162) & 1.167(0.2607) & 0.353(0.0605) & 1.022(0.2434) & 0.499(0.1047) & 0.135(0.033) \\
  BASS-2 & 2.841(0.5796) & 1.753(0.2702) & 1.344(0.1205) & \textbf{0.802(0.2053)} & 0.452(0.0953) & 0.176(0.0399) \\
  LARMuK & 3(0.6708) & 1.212(0.3684) & 0.364(0.1179) & 1.13(0.3235) & 0.541(0.16) & 0.164(0.0566) \\
  LABS & \textbf{2.273(0.568)} &  \textbf{0.848(0.2521)} &  \textbf{0.234(0.0551)} &  0.897(0.242) &  \textbf{0.413(0.1492)} &  \textbf{0.103(0.0406)} \\
		\bottomrule
	\end{tabular}}
		\caption{ Average mean squared errors with estimated standard errors in parentheses from 100 replications for Doppler and Heavisine examples with $n = 128$}
    \label{tab:ex1-dh-128}
\end{table}

\begin{table}[ht!]
 \centering
\adjustbox{max width=0.85\textwidth}{%
 \footnotesize
	\begin{tabular}{ccccccccc}
		\toprule
		\multirow{2}{*}{Model}&  \multicolumn{3}{c}{Bumps}& \multicolumn{3}{c}{Blocks}\\
		\cmidrule{2-7}

		& RSNR=3 & RSNR=5 & RSNR=10 & RSNR=3 &RSNR=5 & RSNR=10 \\
		\midrule

 BSP-2 & 29.359(0.1163) & 29.011(0.0419) & 28.864(0.0105) & 5.097(0.118) & 4.74(0.0425) & 4.59(0.0106) \\
  LOESS & 43.468(5.4554) & 40.385(7.3012) & 36.495(7.2734) & 3.755(0.2529) & 3.095(0.1343) & 2.922(0.0252) \\
  SS & 16.211(0.255) & 15.406(0.123) & 15.06(0.0537) & 2.837(0.1609) & 2.197(0.0713) & 1.883(0.0256) \\
  NWK & 4.885(0.3559) & 1.796(0.1194) & 0.482(0.0319) & 2.34(0.1831) & 1.349(0.1149) & 0.581(0.1147) \\
  EBW & 2.42(0.3291) & 0.914(0.0992) & \textbf{0.272(0.0279)} & 1.227(0.2009) & 0.398(0.0753) & 0.088(0.0177) \\
  GSP-L & 16.704(2.5691) & 16.212(2.324) & 16.224(2.5594) & 3.089(0.342) & 2.613(0.3006) & 2.522(0.3358) \\
  GSP-R & 39.297(1.3464) & 39.046(1.3386) & 38.885(1.2831) & 13.847(1.3079) & 13.783(1.3156) & 13.622(1.2962) \\
  BPP-10 & 1.9(0.6277) & 1.429(0.6619) & 1.352(0.8161) & 0.493(0.137) & 0.216(0.0819) & 0.182(0.0848) \\
  BPP-21 & 4.698(0.9928) & 4.428(1.1195) & 4.308(1.1065) & 1.385(0.1974) & 0.845(0.127) & 0.665(0.1055) \\
  BASS-1 & 2.497(0.5322) & 1.679(0.4936) & 1.356(0.559) & 0.942(0.195) & 0.436(0.1189) & 0.273(0.1078) \\
  BASS-2 & 24.925(2.6356) & 23.806(2.8435) & 23.387(2.6943) & 3.033(0.2242) & 2.613(0.182) & 2.426(0.1763) \\
  LARMuK & 3.692(1.7663) & 2.465(1.0197) & 1.999(0.7813) & 1.074(0.392) & 0.646(0.2521) & 0.395(0.1903) \\
  LABS & \textbf{1.371(0.6845)} & \textbf{0.619(0.1769)} & 0.341(0.1905) & \textbf{0.363(0.1391)} & \textbf{0.113(0.0562)} & \textbf{0.021(0.009)} \\

  \bottomrule
	\end{tabular}}
	\caption{ Average mean squared errors with estimated standard errors in parentheses from 100 replications for Bumps and Blocks examples with $n = 512$}
    \label{tab:ex1-bb-512}
\end{table}

\begin{table}[ht!]
 \centering
\adjustbox{max width=0.85\textwidth}{%
 \footnotesize
	\begin{tabular}{ccccccccc}
		\toprule
		\multirow{2}{*}{Model}&  \multicolumn{3}{c}{Doppler}& \multicolumn{3}{c}{Heavisine}\\
		\cmidrule{2-7}

		& RSNR=3 & RSNR=5 & RSNR=10 & RSNR=3 &RSNR=5 & RSNR=10 \\
		\midrule

  BSP-2 & 2.875(0.1038) & 2.534(0.0374) & 2.39(0.0093) & 0.635(0.1112) & 0.294(0.04) & 0.15(0.01) \\
  LOESS & 2.756(0.3006) & 2.044(0.0474) & 1.893(0.0202) & 0.464(0.067) & 0.306(0.0523) & 0.172(0.0531) \\
  SS & 1.993(0.1234) & 1.385(0.0531) & 1.086(0.0174) & 0.384(0.0699) & 0.225(0.0299) & 0.113(0.0102) \\
  NWK & 1.649(0.1233) & 0.853(0.105) & 0.431(0.0289) & 0.398(0.0789) & 0.223(0.0327) & 0.106(0.0113) \\
  EBW & 1.263(0.1618) & 0.592(0.0809) & \textbf{0.156(0.0209)} & 0.435(0.1007) & 0.202(0.0457) & 0.066(0.0127) \\
  GSP-L & 2.167(0.2292) & 1.767(0.2376) & 1.621(0.2526) & 0.785(0.246) & 0.402(0.1555) & 0.261(0.1383) \\
  GSP-R & 9.389(1.6093) & 9.432(1.6343) & 9.1(1.577) & 0.422(0.0677) & 0.322(0.0409) & 0.279(0.0372) \\
  BPP-10 & 1.36(0.1866) & 0.632(0.0819) & 0.22(0.0283) & 0.431(0.1015) & 0.173(0.0392) & 0.055(0.0108) \\
  BPP-21 & 1.055(0.1663) & \textbf{0.496(0.0768)} & 0.249(0.0468) & 0.308(0.0771) & 0.134(0.0328) & 0.04(0.0091) \\
  BASS-1 & 1.116(0.1587) & 0.602(0.064) & 0.363(0.024) & 0.365(0.0894) & 0.149(0.0348) & 0.046(0.011) \\
  BASS-2 & \textbf{1.051(0.1916)} & 0.588(0.0804) & 0.451(0.0613) & 0.354(0.0709) & 0.169(0.0385) & 0.063(0.0131) \\
  LARMuK & 1.584(0.2401) & 1.04(0.1675) & 0.652(0.1067) & 0.413(0.1155) & 0.19(0.0543) & 0.074(0.0206) \\
  LABS & 1.243(0.2135) & 0.66(0.1141) & 0.343(0.0843) & \textbf{0.291(0.1185)} & \textbf{0.103(0.0508)} & \textbf{0.031(0.0128)} \\
  \bottomrule
	\end{tabular}}
	\caption{ Average mean squared errors with estimated standard errors in parentheses from 100 replications for Doppler and Heavisine examples with $n = 512$}
    \label{tab:ex1-dh-512}
\end{table}

\newpage
\section{Derivation of the full conditionals for LABS}  \label{sec:appendC}

In this appendix, we derive the full conditional distributions of some parameters required for Gibbs sampling. The full conditional posterior of each parameter can be easily obtained via conjugacy properties. Let us first find the full conditional posterior for $\beta_{p,q}$.
\begin{itemize}
\item Full conditional posterior for $\beta_{p,q}$

For each $q = 1,\ldots, J_p$,
{\footnotesize
\begin{align*}
[\beta_{p,q} \,|\, \beta_{p,-q}, \, \text{others},\, \mathbf{Y}] & \propto \left[ \exp \left\{-\frac{1}{2\sigma^2} \sum_{i=1}^{n}(y_i - \beta_0 - \sum_{k \in S}\sum_{l=1}^{J_k} B_{k,l}(x_i ; \boldsymbol{\xi}_{k,l}) \beta_{k,l})^2\right\} \right] \times \left[\exp \left\{ - \sum_{k \in S} \sum_{l=1}^{J_k}\dfrac{\beta_{k,l}^2}{2\sigma^2_{k}}  \right\}\right]  \\
& \propto  \exp \left\{-\frac{1}{2\sigma^2} \sum_{i=1}^{n}(y_i - \beta_0 - \sum_{k \in S \backslash \{p\}}\sum_{l=1}^{J_k} B_{k,l}(x_i ; \boldsymbol{\xi}_{k,l}) \beta_{k,l} - \sum_{l=1}^{J_p}B_{p,l}(x_i ; \boldsymbol{\xi}_{p,l}) \beta_{p,l})^2- \dfrac{1}{2\sigma^2_{p}} \sum_{k=1}^{J_p}\beta_{p,l}^2 \right\} \\
\end{align*}
} For convenience, we set $c_i = \beta_0 + \sum_{k \in S \backslash \{p\}} \sum_{l=1}^{J_k} B_{k,l}(x_i ; \boldsymbol{\xi}_{k,l}) \beta_{k,l}$ as a constant term. Then,

{\footnotesize
\begin{align*}
 & \propto  \exp \left\{-\frac{1}{2\sigma^2} \sum_{i=1}^{n}(y_i - c_i - \sum_{l=1}^{J_p} B_{p,l}(x_i ; \boldsymbol{\xi}_{p,l}) \beta_{p,l})^2 - \dfrac{1}{2\sigma^2_{p}} \sum_{l=1}^{J_p}\beta_{p,l}^2 \right\} \\
& \propto \exp \left\{-\frac{1}{2\sigma^2} \sum_{i=1}^{n}(y_i - c_i - \sum_{l \neq q}^{J_p} B_{p,l}(x_i ; \boldsymbol{\xi}_{p,l}) \beta_{p,l} - B_{p,q}(x_i ; \boldsymbol{\xi}_{p,q})\beta_{p,q})^2 - \dfrac{\beta_{p,q}^2}{2\sigma^2_{p}}  \right\} \\
& \propto \exp \left \{  -\dfrac{1}{2\sigma^2} \left( \beta^2_{p,l} \sum_{i=1}^{n}\left(B_{p,q}(x_i ; \boldsymbol{\xi}_{p,q})\right)^2 - 2\beta_{p,q} \sum_{i=1}^{n}\left(y_i - c_i - \sum_{l\neq q}^{J_p} B_{p,l}(x_i ; \boldsymbol{\xi}_{p,q})\beta_{p,l}\right) \times B_{p,q}(x_i ; \boldsymbol{\xi}_{p,q}) \right)  - \dfrac{\beta_{p,q}^2}{2\sigma^2_{p}}   \right\} \\
& = \exp \left \{ -\dfrac{1}{2} \left(\left( \dfrac{\sum_{i=1}^{n}\left(B_{p,q}(x_i ; \boldsymbol{\xi}_{p,q})\right)^2}{\sigma^2} + \dfrac{1}{\sigma^2_{p}}\right) \beta^2_{p,q}  -2 \dfrac{\sum_{i=1}^{n}\left(y_i - c_i - \sum_{l\neq q}^{J_p} B_{p,l}(x_i ; \boldsymbol{\xi}_{p,q})\beta_{p,l}\right)\left(B_{p,q}(x_i ; \boldsymbol{\xi}_{p,q})\right)}{\sigma^2} \beta_{p,q} \right) \right \}
\end{align*}
}

Thus, the full conditional distribution for $\beta_{p,q}$ is
\begin{equation*}
[\beta_{p,q} \,|\, \beta_{p,-q}, \, \text{others},\, \mathbf{Y}] \sim \mathcal{N}(\mu_{p_0}, \sigma^2_{p_0})
\end{equation*}
with
\begin{gather*}
  \sigma^2_{p_0} = \left( \dfrac{\sum_{i=1}^{n}\left(B_{p,q}(x_i ; \boldsymbol{\xi}_{p,q})\right)^2}{\sigma^2} + \dfrac{1}{\sigma^2_{p}}\right)^{-1} \\
   \mu_{p_0} = \sigma^{2}_{p_0}  \times \dfrac{\sum_{i=1}^{n}\left(y_i - c_i - \sum_{l\neq q}^{J_p} B_{p,l}(x_i ; \boldsymbol{\xi}_{p,q})\beta_{p,l}\right)\left(B_{p,q}(x_i ; \boldsymbol{\xi}_{p,q})\right)}{\sigma^2}.
\end{gather*}

\bigskip
\item Full conditional posterior of $M_k$

For each $k \in S$,
\begin{align*}
[M_k \,|\, \text{others}] & \propto M_k^{J_k} \exp \{-M_k\}\times M_k^{a_{\gamma_k} -1} \exp \{-b_{\gamma_k} M_k\} \\
& = M^{J_k + a_{\gamma_k} -1} \exp \{ -(1+ b_{\gamma_k})M_k\}
\end{align*}

\bigskip

The full conditional distribution for $M_k$ is given by

\begin{equation*}
[M_k \,|\, \text{others}]  \sim \text{Ga}(a_{k}, b_{k})
\end{equation*}
where
\begin{align*}
& a_{k} = a_{\gamma_k} + J_k,\\
& b_{k}= b_{\gamma_k} + 1.
\end{align*}

\item Full conditional posterior of $\sigma^2$

\begin{align*}
[\sigma^2 \,|\, \text{others}, \mathbf{Y}] & \propto \left[(\sigma^2)^{-\frac{n}{2}}\times \exp \left\{-\frac{1}{2\sigma^2} \sum_{i=1}^{n}(y_i - \beta_0 - \sum_{k \in S}\sum_{l=1}^{J_k} B_{k}(x ; \boldsymbol{\xi}_{k,l}) \beta_{k,l})^2\right\} \right] \\
& \times \left[(\sigma^2)^{-\frac{r}{2}+1}\times \exp \left\{-\frac{rR}{2\sigma^2}\right\} \right] \\
& \propto (\sigma^2)^{-\frac{n+r}{2}+1}\exp \left\{ -\dfrac{1}{2 \sigma^2}\sum_{i=1}^{n}(y_i - \beta_0 - \sum_{k \in S}\sum_{l=1}^{J_k} B_{k}(x ; \boldsymbol{\xi}_{k,l}) \beta_{k,l})^2 - \dfrac{rR}{2\sigma^2} \right\}
\end{align*}

\bigskip

The full conditional distribution for $\sigma^2$ is
\begin{equation*}
[\sigma^2 \,|\, \text{others}, \mathbf{Y}] \sim \text{IG}\left(\dfrac{r_0}{2}, \dfrac{r_0R_0}{2} \right)
\end{equation*}
with
\begin{gather*}
 r_0 = r+n,\\
 R_0 = \dfrac{\sum_{i=1}^{n}(y_i - \beta_0 - \sum_{k \in S}\sum_{l=1}^{J_k} B_{k}(x ; \boldsymbol{\xi}_{k,l}) \beta_{k,l})^2 + rR}{r_0}.
\end{gather*}

\end{itemize}

\end{document}